\newcommand{\SHArule}[1]{{\texttt{Rule~#1}\xspace}}
\newtheorem{lemma}{Lemma}
\newtheorem{definition}{Definition}
\newcommand{\am}[1]{{\textsf{\color{red} \small{[#1-am-]}}}}
\newcommand{\pr}[1]{{\textsf{\color{red} \small{[#1-pr-]}}}}
\newcommand{\sa}[1]{{\textsf{\color{red} \small{[#1-bb-]}}}}
\newcommand{\ey}[1]{{\textsf{\color{red} \small{[#1-ey-]}}}}
\newcommand{\mt}[1]{{\textsf{\color{red} \small{[#1-mt-]}}}}
\newcommand{\blue}[1]{{\color{blue} #1}}
\renewcommand{\am}[1]{}
\renewcommand{\pr}[1]{}
\renewcommand{\sa}[1]{}
\renewcommand{\ey}[1]{}
\renewcommand{\mt}[1]{}
\newcommand{\ignore}[1]{{}}
\newcommand{\ourTool}{Piha\xspace}
\newcommand{\iSignal}{\tau} 
\newcommand{\inSignal}[1]{#1~} 
\newcommand{\outSignal}[1]{#1~} 
\newcolumntype{L}[1]{>{\raggedright\let\newline\\\arraybackslash\hspace{0pt}}m{#1}}
\newcolumntype{C}[1]{>{\centering\let\newline\\\arraybackslash\hspace{0pt}}m{#1}}
\newcolumntype{R}[1]{>{\raggedleft\let\newline\\\arraybackslash\hspace{0pt}}m{#1}}
\begin{document}
	
\acrodef{CPS}{Cyber-physical Systems}
\acrodef{DSP}{Digital Signal Processor}
\acrodef{DTTS}{Discrete Time Transition System}
\acrodef{EA}{Evolutionary Algorithm}
\acrodef{HA}{Hybrid Automata}
\acrodef{ILP}{Integer Linear Programming}
\acrodef{MCU}{Microcontroller Unit}
\acrodef{ODE}{Ordinary Differential Equation}
\acrodef{PoC}{Plant-on-a-Chip}
\acrodef{WHA}{Well-formed Hybrid Automata}
\acrodef{SHA}{Synchronous Hybrid Automata}
\acrodef{SWA}{Synchronous Witness Automata}
\acrodef{WCET}{Worst-Case Execution Time}
\newcommand*{\FIGUREwaterHeaterHA}{
\begin{tikzpicture}[->,>=stealth',shorten >=1pt,auto,node distance=5.4cm,
semithick,scale=0.9, transform shape]
\tikzstyle{every state}=[rectangle,rounded corners, minimum height = 2.2cm, text width=2.2cm, text centered, fill=blue!20,draw=none,text=black, draw,line width=0.3mm]

\node[initial,state, 
label={[shift={(0,-2.9)}]$ x = 20$}, 
label={[shift={(-1.5,-0.2)}]\large $ \mathbf{t_1}$ }]
(T1)  {$\dot{x}=0$};
\node[state, 
label={[shift={(0,0.1)}]$20\leq x \leq 100$}, 
label={[shift={(-1.5,-0.2)}]\large $\mathbf{t_2}$   }]
(T2) [above of=T1] 	 {$\dot{x}=K(h-x)$};
\node[state, 
label={[shift={(0,0.1)}]$ x = 100$}, 
label={[shift={(1.5,-0.2)}]\large $\mathbf{t_3}$  }] 
(T3) [right of=T2] 	 {$\dot{x}=0$};
\node[state, 
label={[shift={(0,-2.9)}]$20\leq x \leq 100$}, 
label={[shift={(1.5,-0.2)}]\large $\mathbf{t_4}$  }]
(T4) [right of=T1] 	 {$\dot{x}=-Kx$};

\draw (T1) -- (T2) node [midway] {\footnotesize $ON,x^\prime = x$};
\draw (T2) -- (T3) node [midway] {\footnotesize $B,x=100 \wedge x^\prime = x$};
\draw (T3) -- (T4) node [midway, right] {\footnotesize $OFF,x^\prime = x$};
\draw (T4) -- (T1) node [midway] {\footnotesize $C,x=20 \wedge x^\prime = x$};

\draw (T2.330) -- (T4) node [midway, sloped] {\footnotesize $OFF,x^\prime = x$};
\draw (T4.150) -- (T2) node [midway, sloped] {\footnotesize $ON,x^\prime = x$};

\end{tikzpicture}}

\newcommand{\iflame}[2]{%
  \begin{tikzpicture}[scale=#1]
    \fill[color=#2]
      (0,0) .. controls (-1.5,1.25) and (.5,2) .. (-.2,4)
            .. controls (1,2.5) and (1,.5) .. (0,0);
  \end{tikzpicture}
}

\newcommand{\icandle}[1]{%
  \rlap{\kern-.0275em\raisebox{1.2ex}{\iflame{.20}{#1}}}%
}

%
\title{A synchronous rendering of hybrid systems for designing
  Plant-on-a-Chip (PoC)}
%
%
%
%

\author{Avinash~Malik,
        Partha~S~Roop,
        Sidharta~Andalam, Eugene~Yip,
        and Mark~Trew
\IEEEcompsocitemizethanks{\IEEEcompsocthanksitem Authors are with the Department
of Electrical and Computer Engineering, and  Auckland Bioengineering Institute at
The University of Auckland, New Zealand\protect\\
E-mail: avinash.malik@auckland.ac.nz
\IEEEcompsocthanksitem}
\thanks{Manuscript received xx xxxx xxxx}}

%
%

\markboth{ IEEE Transactions on Software Engineering,~Vol.~, No.~, XXXX~XXXX}%
{}
%



\IEEEtitleabstractindextext{%
\begin{abstract}
  Hybrid systems are discrete controllers that are used for
  controlling a physical process (plant) exhibiting continuous dynamics.
  A hybrid automata (HA) is a well known and widely used formal model for the
  specification of such systems. While many methods exist for
  simulating hybrid automata, there are no known approaches for the
  automatic code generation from HA that are semantic
  preserving. If this were feasible, it would enable the design of
  a plant-on-a-chip (PoC) system that could be used for the \emph{emulation} of the
  plant to validate discrete controllers. Such an approach would need to be
  mathematically sound and should not rely on
  numerical solvers.
   We propose a method of PoC design for plant emulation, 
   not possible before.  The approach
  restricts input/output (I/O) HA models using a set 
  of criteria for well-formedness which are statically verified.  
  Following verification, we use an abstraction based on a 
  synchronous approach to facilitate code generation.  
  This is feasible through a
  sound transformation to synchronous HA.
   We compare our method  (the developed tool called \ourTool) 
    to the widely used Simulink\textsuperscript{\textregistered}
   simulation framework 
  and show that our method is superior in both execution time and code size. 
  Our approach to the PoC problem paves the way for the emulation 
  of physical plants in diverse domains such as robotics, automation,
  medical devices, and intelligent transportation systems.
\end{abstract}


\begin{IEEEkeywords}
Hybrid Automata, Synchronous Languages, 
Plant-on-a-Chip (PoC), Code Generation.
\end{IEEEkeywords}}

\maketitle

\IEEEdisplaynontitleabstractindextext

%
\IEEEpeerreviewmaketitle
\acresetall	

\IEEEraisesectionheading{\section{Introduction}
\label{sec:introduction}}
\acresetall 

\ac{CPS}~\cite{lee08cps, alur2015principles} encompass a wide range of
systems where distributed embedded controllers are used for controlling
physical processes. Examples range from the controller of a robotic arm,
controllers in automotive such as drive-by-wire applications to
pacemakers for a heart. Such systems, also called hybrid systems,
combine a set of discrete controllers with associated physical plants
that exhibit continuous dynamics. An individual plant combined with its
controller is often formally described using
\ac{HA}~\cite{raskin05},~\cite{ alur2015principles, alur93}. These \ac{HA} use a
combination of discrete \emph{locations} or \textit{modes} and \acp{ODE}
to capture the continuous dynamics of an individual mode.

\ac{CPS} pose many key challenges for their design, especially since
they are inherently safety-critical. CPS must operate in a sound manner
while preserving key requirements for both functional and timing
correctness~\cite{wilhelm08wcrt}. Moreover, there is a need for most
\ac{CPS} to be certified using functional safety standards such as
IEC~61580~\cite{iec61508} (robotics and automation),
ISO~26262~\cite{iso26262} (automotive), DO-178B~\cite{youn2015software}
(flight control), US Food and Drug Administration (FDA)~\cite{fda}
(medical devices). Certification is a key barrier to rapid adoption of
new technology as compliance costs are prohibitive. Also, in spite of
products being certified, faults can occur and associated recalls are
required, which can be immensely costly in terms of human lives and
economics. For example, close to 200,000 pacemakers were recalled
between 1990--2000 due to software related failures and this trend is
continuing~\cite{alemzadeh13}.

In spite of these safety crtical needs, current design practices often
use tools without rigorous semantics. For example, simulation tools such
as Matlab\textsuperscript{\textregistered} Simulink\textsuperscript{\textregistered} and
Stateflow\textsuperscript{\textregistered}~\cite{alur08symbolic, bourke13zelus} are arguably
the default standard in many fields, including automotive and medical.
These tools enable the encoding of an HA using a set of
Simulink\textsuperscript{\textregistered} blocks to specify the continuous dynamics while the
discrete aspects are modelled using Stateflow\textsuperscript{\textregistered}.  Due to the
lack of rigorous semantics in Simulink\textsuperscript{\textregistered}/Stateflow\textsuperscript{\textregistered},
many problems have been reported in literature regarding the lack of
good model fidelity. To overcome these problems, tools with rigorous
semantics such as Ptolemy~\cite{ptolemaeus2014system} and
Z\'elus~\cite{bourke15code, bourke13zelus} have been developed. These
tools have developed mathematical semantics for the composition of the
blocks and hence do not have semantic problems. However, all available
tools, we are aware of, use numerical solvers during simulation to solve
the ODEs. This poses two key challenges in the design of the
CPS. Firstly, the simulation is inherently slow and potentially
unsuitable for validating controllers in closed loop that can be
critical to certification. A second problem is that simulation fidelity
is dependent on the nature of the \ac{ODE} solver and changing the
solver may result in a different outcome.

Hardware-in-the loop simulation, also known as {\em
  emulation}~\cite{patel2015survey}, uses the actual plant in
conjunction with the controller for validation.  Emulation has been
shown in many domains (such as motor controllers) to be an extremely
effective method of validation. However, emulation of a hybrid system
may not be feasible due to the following reasons. (1)~The physical
process may also be part of the design and not available for testing,
e.g. a robotic arm and its controller are required but the actual robot
may not exist yet. This is an even more significant hurdle for medical
devices such as pacemakers, where emulating the actual system is often
not feasible or may require animal experiments that are challenging in
terms of ethics, time and cost.  (2)~Simulation of physical process
using existing tools may not be able to meet the timing requirements of
emulation, e.g. real time responses.

Our work overcomes these barriers by proposing, for
the first time, the
concept of plant-on-a-chip (PoC). 
A PoC is an implementation of a hybrid
system in C (and subsequently VHDL), so that a physical process
can be emulated using a processor, ASIC, or FPGA. This device can then
be used for closed-loop validation of controllers in wide ranging
applications in CPS such as robotics, automotive, and medical devices.

\begin{figure}[ht]
\centering

\begin{tikzpicture}[->,>=stealth',shorten >=1pt,auto,
node distance=3.7cm,
semithick,scale=0.9, transform shape]
\tikzstyle{every state}=[rectangle,rounded corners, minimum height = 1.0cm, text width=2.5cm, text centered,draw=none,text=black, draw,line width=0.3mm, font=\footnotesize, inner xsep=0cm, inner ysep=0.2cm ]

\node[state,fill=blue!10]
     (S)  {Specification \\ (step~1)};
     
\node[state,fill=blue!15]
      (W) [right of=S] {Well-formed \\ hybrid automata \\ verification \\ (step~2)};
      
\node[state,fill=blue!20]
      (M) [right of=W] {Modular \\ compilation \\  (step~3)};

\node[state,fill=blue!25, node distance=2.2cm]
      (L) [below of=M] {Linking \\ parallel \\ composition\\ (step~4)};

\node[state,fill=blue!30, node distance=2.2cm]
      (C) [below of=W] {C code and \\ math library\\ (step~5)};

\node[state,fill=blue!35, node distance=2.2cm]
      (D) [below of=S] {Deployment \\ on the \\ embedded device\\ (step~6)};
      
\draw[ultra thick] (S) -- (W);
\draw[ultra thick] (W) -- (M) node [midway,above, pos=0.50] {\footnotesize Pass};
\draw[ultra thick] (W.90) -- ([yshift=1em]W.90) -| (S.90)
	node [midway, above, pos=0.25] 
	{\footnotesize Fail--review the specification};
\draw[ultra thick] (M) -- (L);
\draw[ultra thick] (L) -- (C);
\draw[ultra thick] (C) -- (D);
           
\end{tikzpicture}

\caption{An overview of the proposed methodology}
\label{fig:methodology}

\end{figure}
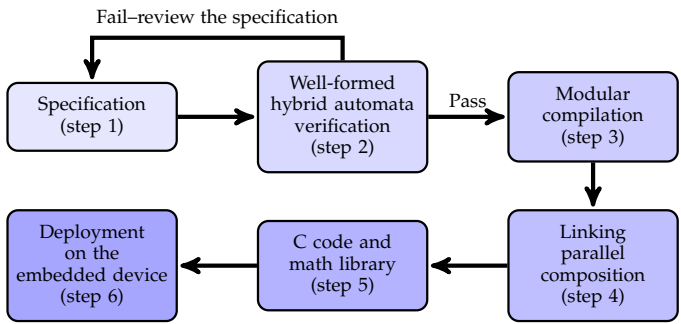

The proposed approach significantly extends the synchronous
approach~\cite{benveniste03} that is the default standard for designing
safety critical systems in many domains including civil
aviation. Consequently, our code generation relies on the well known
tenets of the synchronous approach. An overview
of the proposed methodology is presented in
Figure~\ref{fig:methodology}.  We start with a specification that
describes a network of HA models of a CPS (step 1). To facilitate code
generation, we constrain such models using a set of well formedness
criteria, which can be statically verified (step 2). We then introduce
an abstraction called synchronous hybrid automata (SHA), that provides
the semantics of the generated code, and using this abstraction separate
compilation can be performed for each HA (step 3). Such modular
compilation is feasible due to the synchronous semantics of a network of
well formed HAs (WHAs) presented in Section~\ref{sec:DTTS-par}. Using
this semantics, we perform a linking operation of the generated C code
(step 4). Finally, the generated C code (step 5) can be compiled using a
standard C compiler to deploy the binary on an embedded target platform
(step 6). The proposed methodology is detailed in
Section~\ref{sec:code-gen}.

\subsection{Related work}
\label{sec:related}

The design of CPS requires both formal analysis~\cite{baier08} and
testing-based validation. Of particular importance is the need for
emulation-based validation~\cite{patel2015survey}, as outlined above.
Hybrid automata and their compositions~\cite{alur93, raskin05} provide
a formal framework for the specification of hybrid systems. There are
well known negative results on the decidability of the reachability
problem 
 over hybrid automata, i.e., even the
simplest class, called \emph{linear hybrid automata}, is
undecidable~\cite{raskin05}. 
Several restrictions have been proposed
with associated semi-decidable results that are used for developing
model checking-based solutions~\cite{SpaceEx}.
 However, the negative decidability
results mean that both formal analysis and modular code generation are critical~\cite{bourke13zelus}.

Unlike the objective of modular code generation based on formal models,
conventional \ac{CPS} design relies on commercial tools such as
Simulink\textsuperscript{\textregistered} or Stateflow\textsuperscript{\textregistered}~\cite{alur08symbolic}.  However, 
Simulink\textsuperscript{\textregistered} and Stateflow\textsuperscript{\textregistered} lack formal semantics for
the composition of the components leading to issues with simulation
fidelity (see~\cite{tripakis05translating, bourke13zelus} for
details). In contrast to commercial tools, academic variants such
as Ptolemy~\cite{ptolemaeus2014system} and Z\'elus~\cite{bourke13zelus}
are founded on formal models of computation. Ptolemy supports multiple
models of computations and a specification can seamlessly combine
them. This approach is extremely powerful for the simulation of complex
\ac{CPS}.  These tools can model hybrid systems that use numerical solvers.

Sequential code generation from synchronous specifications is a very
well developed area~\cite{benveniste03, andalam14}. For example, the
code generator from SCADE
specifications~\cite{scade,fornari2010understanding} generates certified
code according to the DO-178B standard.  However, existing synchronous
code generators are designed for discrete systems only and unsuitable
for generating code from HA descriptions.  Z\'elus~\cite{bourke13zelus}, on the other hand,
is an extension of the synchronous language Lustre~\cite{benveniste03}
with \acp{ODE} for the design of hybrid systems.  Z\'elus~\cite{bourke13zelus} overcomes the
semantic ambiguities associated with Simulink\textsuperscript{\textregistered}. While Ptolemy~\cite{ptolemaeus2014system} and Z\'elus~\cite{bourke13zelus}
are significant developments for CPS design due to their sound
semantics, they are limited by their reliance on dynamic \ac{ODE}
solvers during simulation. Consequently, they are unsuitable for the
emulation of physical processes. However, we note that approaches such as 
Ptolemy~\cite{ptolemaeus2014system} and Z\'elus~\cite{bourke13zelus} are highly
expressive as they can simulate any HA, 
while the proposed approach generates code 
from \ac{WHA} models, which are a subset of HA.

\ignore{Here we propose an approach for validation based on closed-loop
  emulation of the CPS using a PoC to emulate the behaviour of the
  physical processes accurately. This requires code generation from
  hybrid automata models, unlike conventional approaches that use
  numerical simulators to solve ODES dynamically. This is essential due
  to two key reasons. Firstly, ODE solvers operate very speculatively
  (including backtracking steps) and changing an ODE solver may change
  the outcome of simulation. Secondly, the invocation of an dynamic ODE
  solver may substantially slow-down the execution, which is not ideal
  for embedded implementation of the generated code to facilitate
  real-time, closed-loop emulation.}

Code generation without using dynamic invocation of 
numerical solvers have been attempted in
the past with limited success.  Alur et al.
generated code from statechart-like hierarchical HA
models~\cite{alur2003generating}. Like our approach, the authors proposed a requirement of restricting
HA execution to what was statically verifiable. They enforced the requirement
that ``a given mode and the corresponding guard of any switch must
overlap for a duration that is greater than the sampling
period''~\cite{alur2003generating}. However, we show that code generation is
feasible without the need for such a restrictive approach. In a similar
vein, Kim et al. also presented an approach for modular code
generation from HA~\cite{kim2003modular}.  Their key idea was also based on the discretization of
HA, requiring that ``transitions are taken at time multiples
of the discretization step.'' Also, a SystemC based code
  generation for \ac{HA} has been proposed~\cite{Bresolin2011}. 
  While the title claims code generation for \ac{HA}, the actual
  algorithms are restricted to timed automata~\cite{alur94} and hence unsuitable
  for the general class of \ac{HA}.
  This is not adequate for
  modelling the physical plant properties in many examples.
   In
contrast, our approach  (the developed tool called \ourTool
\footnote{Piha is the name of a 
	well known New Zealand surfing paradise.
	Piha (Pi-Hybrid Automata) also refers to synchronous HA
	and is not related to the concept of $\pi$ in   pi-calculus.
	})
generates code using a technique that is similar
in principle, but without these overly restrictive requirements.  Hence,
we demonstrate that many practical \ac{CPS}, 
ranging from pacemakers to
boiler systems, can be designed and 
implemented using the proposed approach.

\subsection{Contributions and organisation}

The key contributions of the paper are:

\begin{itemize}
\item \emph{A software engineering (SE) approach for designing CPS}: We
  propose an approach that is based on key SE principles founded on
  formal methods and model-driven engineering~\cite{yoong2015model}. The
  overall system design starts with the description of the system as a
  network of hybrid automata. We propose a new semantics of HA based on
  the notion of \acf{WHA}. The WHA requirements can be
  statically checked and those that satisfy the WHA requirements are
  amenable to code generation.
\item \emph{Code generation without dynamic numerical solving}: We use
  a synchronous approach for the design of hybrid systems~\cite{bourke13zelus, bourke15code}.
  Unlike these code generators, which rely on
  dynamic ODE solvers rendering them less suitable for PoC design, we
  generate code that uses no numerical solver at run-time. This is
  achieved by an algorithm that checks WHA requirements at compile time
  to create closed form solutions of ODEs using symbolic solvers.
   WHA based code generation is less
   restrictive compared with existing code generators for
   HA~\cite{alur2003generating, kim2003modular} that also avoid dynamic
   ODE solvers.
\item \emph{Semantic preserving modular compilation of HA}: 
	The proposed approach is based on a compositional
	semantics such that code can be generated for each \ac{HA} separately.
\item \emph{A new PoC design approach}: We propose the first
  comprehensive approach for PoC design enabling the emulation of many CPS
  applications. Our approach, thus, paves the way for the reduction of
  the certification effort for safety-critical devices such as
  pacemakers~\cite{chen14}, automotive electronics, and many other
  applications where certification is a requirement.
\item \emph{Practicality}: We have applied our approach to the design of
  several hybrid systems and compared our implementation with the widely used 
  Simulink\textsuperscript{\textregistered} and Stateflow\textsuperscript{\textregistered}.
\item \emph{Software tool}: We have implemented these concepts in a software tool called Piha.
\end{itemize}

The paper is organised as follows. In section~\ref{sec:background} we
introduce HA as a modelling framework for CPS. We also define WHA as a
restrictive class that is amenable for code generation.  In
section~\ref{sec:synchronous} we introduce an abstraction of WHA called
synchronous hybrid automata and present its semantics.  \ignore{We also
formalise the soundness of this semantic abstraction.}  In
section~\ref{sec:code-gen} we present algorithms for code generation
based on the developed synchronous semantics.  The algorithms include
those to determine if an HA meets the requirements for code generation
based on WHA. It also includes the back-end code generators from HA
definitions to C using the developed synchronous semantics.
Section~\ref{sec: results} compares the proposed approach with existing
tools such as Simulink\textsuperscript{\textregistered} and Stateflow\textsuperscript{\textregistered}. Finally in
section~\ref{sec:conclusions} we make concluding remarks including the
limitations of the developed approach and future research avenues.


\section{Background: Hybrid automata}
\label{sec:background}
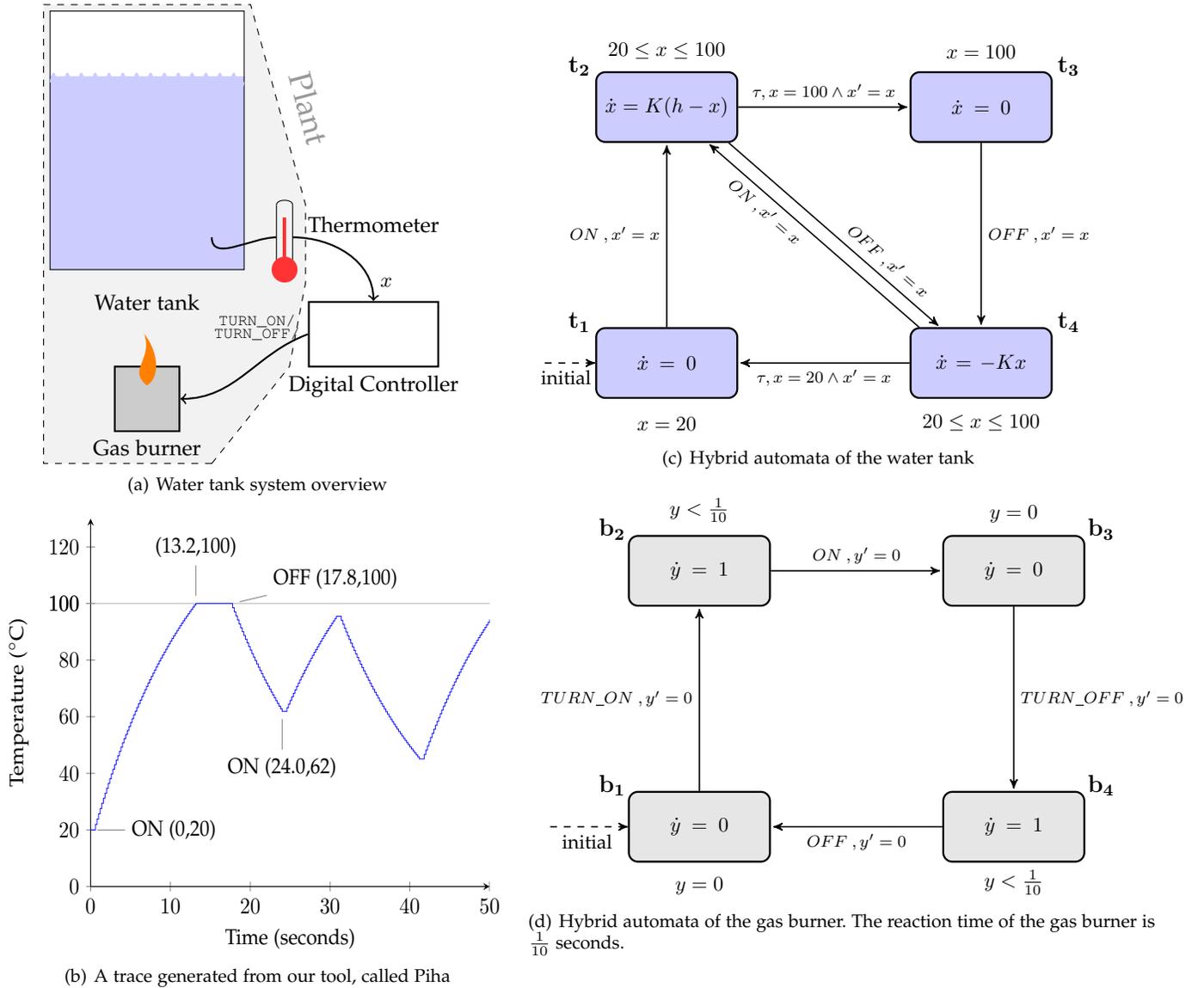
\begin{figure*}[ht]
	\begin{minipage}{0.45\linewidth}
		\centering
		\subfigure[Water tank system overview]{
			\begin{tikzpicture}

	\draw [dashed,  fill=gray!10] plot [smooth cycle, tension=0] 
	coordinates { (-0.1,-3.0) (2.5,-3) (3.7,-1.5)(3.95,-0.2) (3.95,1) (3.1,4.1) (-0.1,4.1) };
	\draw[gray!80] (4,2.5) node[rotate =-75] {\Large  Plant}; 

     \draw[fill=blue!20, draw=blue!20] 
     (0,0) rectangle (3,3);
     \draw[fill=white, draw=white] 
     (0,3) rectangle (3,4);
     \draw (0,0) rectangle (3,4);
     \draw (1.5,-0.5) node {Water tank}; 
    
    \draw[fill=blue!20, draw=blue!20, decorate, 
    decoration={complete sines,amplitude=8pt, segment length=6pt}
    ] 
         (0,2.9) -- (3,2.9);

    \node[fill=gray!40] (GB) at (1.5,-2) [draw,thick,minimum width=1cm,minimum height=1.0cm, label=below:Gas burner] {};
      
    \node (FLAME) at (GB.120) {\icandle{orange}};


	\draw (3.5,1)--(3.5,0);
	\draw (3.75,1)--(3.75,0);
	\draw (3.5,1) to[out=90,in=90] (3.75,1);
	\draw[fill=red!80, draw=none] 
	(3.625,0) circle (0.2);
	\draw[fill=red!80, draw=none] 
	(3.6,0) rectangle (3.65,0.8);
	\draw (5,0.7) node {Thermometer};

	\node (DC) at (5,-1) [draw,thick,minimum width=2cm,minimum height=1cm, label=below:Digital Controller] {};

	\draw[thick] (2.5,0.5) to[out=-90,in=180] (3.5,0.5);
	\draw[thick,->] (3.75,0.5) to[out=0,in=90] (DC.90);
	\draw (5.2,-0.2) node {\small \tt $x$};
	
	\draw[<-, thick] (GB) to[out=0,in=200] (DC.180);
	\draw (3.2,-0.8) node {\scriptsize \tt TURN\_ON/};
	\draw (3.2,-01) node {\scriptsize \tt TURN\_OFF/};

\end{tikzpicture}
			\label{fig:waterTankSystem}
		}
		\subfigure[A trace generated from our tool, called~\ourTool]{
			
			 \begin{tikzpicture}[transform shape, xscale=0.9]
\begin{axis}
[ xlabel={Time (seconds)},
ylabel={Temperature ({$^\circ$C})},
axis y line = left,
axis x line = bottom,
xmin=0,   xmax=50,
ymin=0,   ymax=130,
extra y ticks={100},
extra tick style={grid=major}
]
\addplot[color=blue!90,
mark=.,
mark size=2,
smooth,
const plot
]
table [x=b, y=c, col sep=comma] {./figures/WTdata.csv};
\node[pin=0:{ON (0,20)}] at (axis cs:0,20) {};
\node[pin=90:{ (13.2,100)}] at (axis cs:13.2,100) {};
\node[pin=10:{OFF (17.8,100)}] at (axis cs:17.8,100) {};
\node[pin=-90:{ON (24.0,62)}] at (axis cs:24.0,61.87) {};
\end{axis}
\end{tikzpicture}
			\label{fig:waterTankTrace}	
		}
	\end{minipage}
	\begin{minipage}{0.54\linewidth}
		\subfigure[Hybrid automata of the water tank]{
			\begin{tikzpicture}[->,>=stealth',shorten >=1pt,auto,
node distance=5.4cm,
semithick,scale=0.9, transform shape]
\tikzstyle{every state}=[rectangle,rounded corners,
 minimum height = 1.2cm, text width=2.2cm, text centered, fill=blue!20,draw=none,text=black, draw,line width=0.3mm]

\node[state, 
label={[shift={(0,-1.9)}]$ x = 20$}, 
label={[shift={(-1.5,-0.2)}]\large $ \mathbf{t_1}$ }]
(T1)  {$\dot{x}=0$};

\draw[<-, dashed](T1.180) -- node[below] {initial} ++(-1cm,-0cm);

\node[state, 
label={[shift={(0,0.1)}]$20\leq x \leq 100$}, 
label={[shift={(-1.5,-0.2)}]\large $\mathbf{t_2}$   }]
(T2) [node distance=4.4cm, above of=T1] 	 {$\dot{x}=K(h-x)$};

\node[state, 
label={[shift={(0,0.1)}]$ x = 100$}, 
label={[shift={(1.5,-0.2)}]\large $\mathbf{t_3}$  }] 
(T3) [right of=T2] 	 {$\dot{x}=0$};

\node[state, 
label={[shift={(0,-1.9)}]$20\leq x \leq 100$}, 
label={[shift={(1.5,-0.2)}]\large $\mathbf{t_4}$  }]
(T4) [right of=T1] 	 {$\dot{x}=-Kx$};

\draw (T1) -- (T2) node [midway] 
{\footnotesize $\inSignal{ON},x^\prime = x$};
\draw (T2) -- (T3) node [midway] 
{\footnotesize $\iSignal,x=100 \wedge x^\prime = x$};
\draw (T3) -- (T4) node [midway, right] 
{\footnotesize $\inSignal{OFF},x^\prime = x$};
\draw (T4) -- (T1) node [midway] 
{\footnotesize $\iSignal,x=20 \wedge x^\prime = x$};

\draw (T2.330) -- (T4) node [midway, sloped] 
{\footnotesize $\inSignal{OFF},x^\prime = x$};
\draw (T4.150) -- (T2) node [midway, sloped] 
{\footnotesize $\inSignal{ON},x^\prime = x$};

\end{tikzpicture}
			\label{fig:waterTankHAtank}

		}
		\subfigure[Hybrid automata of the gas burner. 
		The reaction time of the gas burner
		is \newline $\frac{1}{10}$ seconds.]{
			\begin{tikzpicture}[->,>=stealth',shorten >=1pt,auto,
node distance=5.4cm,
semithick,scale=0.9, transform shape]
\tikzstyle{every state}=[rectangle,rounded corners,
 minimum height = 1.2cm, text width=2.2cm, text centered, fill=gray!20,draw=none,text=black, draw,line width=0.3mm]

\node[state, 
label={[shift={(0,-1.9)}]$ y=0$}, 
label={[shift={(-1.5,-0.2)}]\large $ \mathbf{b_1}$ }]
(B1)  {$\dot{y}=0$};

\draw[<-, dashed](B1.180) -- node[below] 
{initial} ++(-1.4cm,-0cm);

\node[state, 
label={[shift={(0,0.1)}]$y< \frac{1}{10}$}, 
label={[shift={(-1.5,-0.2)}]\large $\mathbf{b_2}$   }]
(B2) [node distance=4.4cm, above of=B1] 	 {$\dot{y}=1$};

\node[state, 
label={[shift={(0,0.1)}]$ y=0$}, 
label={[shift={(1.5,-0.2)}]\large $\mathbf{b_3}$  }] 
(B3) [right of=B2] 	 {$\dot{y}=0$};

\node[state, 
label={[shift={(0,-1.9)}]$y< \frac{1}{10}$}, 
label={[shift={(1.5,-0.2)}]\large $\mathbf{b_4}$  }]
(B4) [right of=B1] 	 {$\dot{y}=1$};

\draw (B1) -- (B2) node [midway] 
{\footnotesize $\inSignal{TURN\_ON},y^\prime = 0$};
\draw (B2) -- (B3) node [midway] 
{\footnotesize $\outSignal{ON},y^\prime = 0$};
\draw (B3) -- (B4) node [midway, right] 
{\footnotesize $\inSignal{TURN\_OFF},y^\prime = 0$};
\draw (B4) -- (B1) node [midway] 
{\footnotesize $\outSignal{OFF}, y^\prime = 0$};

\end{tikzpicture}
			\label{fig:waterTankHAburner}	
		}
	\end{minipage}

	\caption{
		An example of a water tank heating system -- adapted from~\cite{raskin05}.
		Figure~\ref{fig:waterTankSystem} presents the plant and the discrete controller. 
		Figures~\ref{fig:waterTankHAtank} and ~\ref{fig:waterTankHAburner}  captures the behaviour of the water tank  and the gas burner.
		Figure~\ref{fig:waterTankTrace} presents an example trace through a water tank heating system (generated from our tool). The trace is a set of discrete points sampled every $0.2$ seconds.
		 }
	\label{fig:waterTank}
\end{figure*}



We use hybrid IO automata of Lynch et al.~\cite{lynch03} to capture the
behaviour of the hybrid system similar to the models adopted by Chen et
al.~\cite{chen14}.  This approach is specifically amenable to the
modelling of both the plant and its adjoining controller, and hence
ideal for the PoC design approach presented here.  We refer to them as
\acf{HA} for convenience.

To illustrate the concept of \ac{HA}, consider a
water tank heating system, presented in Figure~\ref{fig:waterTank}
(adapted from~\cite{raskin05}), where the temperature of the water
inside the tank is to be maintained between $20$ and $100^\circ$C. The
temperature is controlled by a digital controller that switches the
heating system \texttt{ON} and \texttt{OFF} at specific times so as to
maintain the temperature within the required range. The temperature of
water inside a tank is modelled using Newton's law of heating and
cooling, which can be represented as $x(t) = I e^{-Kt}+h(1-e^{-Kt})$.
Here, $I$ is the initial temperature, $K$ is a constant that depends on
the thermal conductivity of the tank, and $h$ is a constant that depends
on the power output of the heating system, such as a gas burner.  An
example trace of the temperature of water inside the tank for an initial
temperature of $20^\circ$C is shown in Figure~\ref{fig:waterTankTrace}.
 The initial temperature is $20^\circ$C
  and the heating system is turned \texttt{ON}.  Next, the temperature
  of water increases until it boils at $100^\circ$C, when the time
  instant is $13.2$ seconds.  Next, the temperature remains at
  $100^\circ$C until $17.8$ seconds, when the heating system is turned
  \texttt{OFF} by the controller.  The temperature of water drops until
  $24.6$ seconds, when the controller switches \texttt{ON} the heating
  system again.  We describe the water tank as a hybrid automata in
Figure~\ref{fig:waterTankHAtank} and that of the gas burner in
Figure~\ref{fig:waterTankHAburner}.

The specification of the water tank heating system using an HA is shown
in Figure~\ref{fig:waterTankHAtank}.  An HA captures the interaction
between the plant (exhibiting \emph{continuous dynamics}) and a
controller (making \emph{discrete mode switches}).  The continuous
evolution of some real-valued variables is modelled using a set of
\acfp{ODE}, which specify the rate of change of these variables. The
controller effects the discrete mode switches and in each mode the plant
exhibits different dynamics.  As a consequence, the ODEs in the
different modes (also called locations) may be different.  For the water
tank heating system, the HA can be in any one of four different
locations \texttt{$t_1$}, \texttt{$t_2$}, \texttt{$t_3$}, and
\texttt{$t_4$}, where \texttt{$t_1$} is the initial location.  The
variable $x$ is used to model the temperature of the water inside
the tank.  Each location has some flow predicates that specify the rate
of change of the continuous variables.  For example, in location $t_2$
the flow predicate that defines the rate of change of temperature when
the heater is turned \texttt{ON} is specified as $\dot{x}=K(h-x)$.
Invariants are associated with locations, e.g. $20 \leq x \leq 100$ is
an invariant associated with location $t_2$.  Execution remains in a
location until the invariant holds or an egress transition triggers
prior to this.  Some locations may have initialization conditions that
provide the initial values of the variables i.e. location $t_1$.  A
transition out of a locations is enabled when the input (event) is
present and the jump condition associated with the transition
holds. When a given transition is taken, the final value of the
variables are updated. For example, the transition from $t_2$ to $t_3$
is enabled when the value of temperature is $100^\circ$C and as the
transition is taken, the final value of $x$ is updated using the
equation $x'=x$, where $x'$ is an update variable. The values of the
update variables are used as the initial values of the variables in the
destination location.

An HA~\cite{lynch03} is defined using Definition~\ref{def:ha}. For the
water tank example shown in Figure~\ref{fig:waterTankHAtank},
$Loc=\{t_1, t_2, t_3, t_4\}$, $\Sigma=\{ON, OFF, \tau\}$, and an example
edge in $Edge$ is $(t_1, ON, t_2)$. There is a single continuous
variable $x$ representing the temperature of the water in the
tank. Hence, $X=\{x\}$, $\dot{X}=\{\dot{x}\}$, and $X'=\{x'\}$. The only
location that is marked initial is $t_1$ and $Init(t_1): x=20$. An
example flow predicate is $Flow(t_1): \dot{x}=0$, which specifies that
the temperature of the water inside the tank does not change.  A jump
predicate assigned to an edge specifies the condition needed to take a
transition and the updating of values of some continuous variables when
the transition is taken.  For example,
$Jump(t_4, \tau, t_1): x=20 \wedge x'=x$ specifies that the transition
is taken when the value of the temperature is $20^\circ$C and the
updated value of the temperature is also $20^\circ$C. We now formalise
the HA using Definition~\ref{def:ha}.

\begin{definition}
  A hybrid automata is \newline 
  $H = \langle Loc, Edge, \Sigma, Inv, Flow, Jump \rangle$ where:
  \begin{itemize}
  \item $Loc=\{l_1,..,l_n\}$ representing $n$ control modes or
    locations.
  \item $Edge \subseteq Loc \times \Sigma \times Loc$ are the set of
    edges between locations.
  \item $\Sigma=\Sigma_I \cup \Sigma_O \cup \{\iSignal\}$ is set of
    event names comprising of input, output, and internal events.
  \item Three sets for the set of continuous variables, their rate of
    change and their updated values represented as follows: \newline
    $X=\{x_1,.., x_m\}$, $\dot{X}=\{\dot{x_1},.., \dot{x_m}\}$, and \newline
    $X'=\{x_{1}',.., x_{m}'\}$.
  \item $Init(l)$: Is a predicate whose free variables are from $X$. It
    specifies the possible valuations of these when the HA starts in
    $l$.
  \item $Inv(l)$: Is a predicate whose free variables are from $X$ and
    it constrains these when the HA resides in $l$.
  \item $Flow(l)$: Is a predicate whose free variables are from
    $X \cup \dot{X}$ and it specifies the rate of change of these
    variables when the HA resides in $l$.
  \item $Jump(e)$: Is a function that assigns to the edge `$e$' a
    predicate whose free variables are from $X \cup X'$. This predicate
    specifies when the mode switch using `$e$' is possible. It also
    specifies the updated values of the variables when this mode switch
    happens.
  \end{itemize}
  \label{def:ha}
\end{definition}

The semantics of an HA is specified using the notion of timed transitions
systems (TTS) using Definition~\ref{def:semanticsHA}.

\begin{definition}
  The semantics of an HA \newline
  $H = \langle Loc, Edge, \Sigma, Inv, Flow, Jump \rangle$ is
  defined by a timed transition system 
  $TTS =\langle Q, Q_0, \Sigma, \rightarrow \rangle$
  \begin{itemize}
  \item $Q$ is of the form $(l, v)$ where $l$ is a location and
    $v \in [X \rightarrow \mathbb{R}]$ such that $v$ satisfies
    $Inv(l)$. $Q$ is called the state-space of H.
  \item $Q_0 \subseteq Q$ of the form $(l, v)$ such that $v$ satisfies
    $Init(l)$.
  \item $\rightarrow$ is the set of transitions consisting of either:
    \begin{enumerate}
    \item Discrete transitions (instantaneous): For each edge \newline
      $e=(l,\sigma,l')$, we have
      $(l,v) \stackrel{\sigma} \rightarrow (l',v')$ if $(l,v) \in Q$,
      $(l',v') \in Q$ and $(v, v')$ satisfy $Jump(e)$.
    \item Continuous transition (delay): For each non-negative real
      $\delta$, we have $(l,v) \stackrel{\delta} \rightarrow (l,v')$ if
      $(l,v) \in Q$, $(l,v') \in Q$, and there is a differentiable
      function $f: [0,\delta] \rightarrow \mathbb{R}^{m}$ (which is a
      witness function of the above transition where $m=|X|$) with first
      derivative $\dot{f}: (0, \delta) \rightarrow \mathbb{R}^{m}$ such
      that the following conditions hold:
      \begin{itemize}
      \item $f(0) = v$ and $f(\delta)=v'$,
      \item for all $\epsilon \in (0, \delta), f(\epsilon)$
        satisfies $Inv(l)$ and $(f(\epsilon), \dot{f}(\epsilon))$
        satisfies $Flow(l)$.
      \end{itemize}
    \end{enumerate}
  \end{itemize}
  \label{def:semanticsHA} 
\end{definition}

The TTS semantics of an HA encodes the state-space $Q$ as all possible
pairs of the form $(l, v)$, where $l$ is a location and $v$ is a vector
representing the valuation of the variables. \ignore{For some $x \in X$ and for
some $t \in \mathbb{R}^{\geq 0}$, we write $x(t) \in v$, to mean the
valuation of the variable $x$ at time instant $t$.}  In general, the
state-space is infinite as there are infinite possible valuations of
variables within any time-step $\delta$.  Transitions may be any one
of two types. (1) Discrete transitions that lead to a mode switch and 
these transitions are  \emph{instantaneous}.  This
happens when the input event is present, the guard condition is true and
the updates are performed as the transition is taken. (2) Continuous
transitions that capture the passage of time when control resides in a
location.


\section{Semantic restrictions for efficient code generation}
\label{sec:synchronous}
The HA model introduced in the previous section is very generic,
expressive, and hence powerful.  However, it has limitations for code
generation. Since we want to generate code for emulating physical processes,
i.e., for PoC design, the efficient, real-time response of the generated code
is critical. \ignore{ While allowing arbitrary predicates to describe
  guards and invariants is powerful, code generation for arbitrary
  predicates means that we have to evaluate predicates for satisfiablity
  and validity, both of which are complex operations whose time
  complexity is NP-complete. Even more important is the fact that with
  arbitrary predicates and dependencies between HAs, modular code
  generation is infeasible.}

Another limitation of generic HA must also be carefully considered for
code generation.  As argued in the introduction, all known tools
(including ones widely used in industry) solve ODEs at
run-time using numeric techniques. Simulink\textsuperscript{\textregistered} also
introduces semantic ambiguities and there are problems with simulation
fidelity, due to the dependence on the step size selected for the
numerical solver~\cite{alur08symbolic}. For example, a numerical solver may fail to detect the peak of
a continuous variable and may perform backtracking and other heuristic
techniques, that affect the outcome of the simulation.  As a
consequence, robust tools, such as Ptolemy, are advocating the use of
techniques such as quantized-state systems~\cite{brooks15} as an alternative to dynamic
numerical simulation.

Considering the difficulty posed by the above issues, we propose the
well-formedness criteria for admitting a sub-class of HA, that we term 
\acf{WHA}, which is amenable to automatic code generation. Arbitrary predicates for describing invariants and
guard conditions are replaced by more structured conditions 
inspired by clock constraints in timed automata~\cite{alur94}. We also
move away from dynamic numerical solvers and instead use analytic solutions. Most 
importantly, we develop a synchronous semantics for the
generated code that provides key advantages for modular compilation and
sound transformations that are semantic preserving.

\subsection{Well-formed hybrid automata (WHA)}
\label{sec:wha}

A set is defined called $CV(X)$ that will be used for creating
constraints over the continuous variables. This is provided in
Definition~\ref{def:cvx}. These constraints are either comparison of a
variable with a natural number or Boolean combinations of these.
Comparisons can only be performed using the following mathematical operators:
$<, \leq, >, \geq$.

\begin{definition}
  Given a set of continuous variables $X$, then a set of constraints
  over these variables are defined as follows: \newline
  $g:=x<n | x \leq n| x > n | x \geq n | g \wedge g$ where,
  $n \in \mathbb{N}$ and $x \in X$. $CV(X)$ denotes the set of
  constraints over $X$. Without loss of generality we can extend these
  constraints to the domain of rational numbers $\mathbb{Q}$.
  \label{def:cvx} 
\end{definition}

A WHA must satisfy the following conditions and are defined in
Definition~\ref{def:wha}.

\begin{itemize}
\item All invariants, jump and init predicates are members of the set
  $CV(X)$.
\item In HA semantics, when control resides in a location, there are
  infinite valuations of variables in any given interval of time. This
  makes code generation difficult.  To facilitate code generation, we
  make evaluations only at discrete intervals. These intervals
  correspond to the \emph{ticks} of a synchronous program that will be
  used for code generation.
\item The ODEs which define flow constraints in any location of the form
  $\dot{x}=f(x)$ must be \emph{closed form} in nature. This property
  ensures that such ODEs are analytically solvable so that the witness
  functions needed for the generated code are analytically computable.
\item All witness functions must be monotonic. This necessary property ensures 
  that generated code can evaluate invariants
  and jump predicates using \emph{saturation} rather than backtracking.
   This is detailed in 
  Section~\ref{sec:saturation-function}.
\end{itemize}

We have defined WHA for both modular code generation and to enable
formal verification, which is not the focus of the current paper. If
only modular code generation is needed, some of the WHA restrictions may
be relaxed further.

\begin{definition}
  A well-formed hybrid automata \newline
  $WHA = \langle Loc, Edge, \Sigma, Init, Inv, Flow, Jump \rangle$ where
  \begin{itemize}
  \item $Loc=\{l_1,..,l_n\}$ representing $n$ control modes or
    locations.
  \item $\Sigma = \Sigma_I \cup \Sigma_O \cup \{\tau\}$ is the input
    alphabet comprising of event names, including internal events.
  \item $Edge \subseteq Loc \times \Sigma \times Loc$ are the set of
    edges between locations.
  \item Three sets for the set of continuous variables, their rate of
    change and their updated values represented as follows: \newline
    $X=\{x_1,.., x_m\}$, $\dot{X}=\{\dot{x_1},.., \dot{x_m}\}$, and \newline
    $X'=\{x_{1}',.., x_{m}'\}$.
  \item $Init(l)$: $x_1=v_1, .., x_m=v_m$ where
    $v_1..v_m \in \mathbb{R}$. 
  \item $Inv(l)$:  $\in CV(X)$
  \item $Flow(l)$: A set of ODEs that meet the closed form
    requirement. Also, the witness functions of the ODEs are
    monotonic.
  \item $Jump(e)$: This function maps each edge `$e$' to the conjunction
    of a guard and an update.  The guard is in $ CV(X)$ and the update
    is in $UP(X,X')$, which specifies the value of the updated
    variables.
  \item Fairness: the system never remains in any location indefinitely,
    i.e., if a location invariant holds indefinitely, then a egress
    transition is enabled by the controller within a bounded time.
  \end{itemize}
  \label{def:wha}
\end{definition}

The semantics of a WHA is also a TTS (Definition 2).

\subsection{Background on the synchronous approach}
\label{sec:synchronous-background}

The \emph{synchronous approach}~\cite{benveniste03} is 
widely used for the management of concurrency and race
conditions. It is, in particular, applied extensively to the design of
safety-critical systems, such as the embedded software for the Airbus A320
which uses the SCADE language and associated tools~\cite{scade}.

The synchronous approach is ideal for the development of \emph{reactive
systems}~\cite{harel:ReactiveSystems}.  A reactive system reacts to
its environment continuously and, in order to remain reactive, the speed
of the environment must be considered, i.e. outputs must be produced for
the current set of inputs before the next set of inputs appear.

The synchronous paradigm assumes that the \emph{idealised} reactive
system produces outputs synchronously relative to inputs, i.e., the
outputs are produced after zero delay. This is possible if the reactive
system executes infinitely fast relative to its environment, which is
known as the \emph{synchrony hypothesis}~\cite{benveniste03}.  Based on
this hypothesis, time may be treated as a sequence of discrete instants
or \emph{ticks} with nothing happening between the completion of the
current tick and the start of the next tick. This idea is prevalent in
various fields of engineering such as digital logic design and control
systems~\cite{ogata10}. Typically, the simulation step size of the ODE
solver used for the plant must match the sampling step size of the
controller for correct system simulation~\cite{carlsson2012methods} and
this aspect is often difficult to achieve. Unlike this, in the
synchronous approach, the notion of synchronous composition
~\cite{yoong2015model} formalises this automatically using compilation
technology.

The synchrony hypothesis is valid so long as the minimum inter-arrival
time of input events is longer than the maximum time required to
perform a reaction or tick.  This requires the computation of 
the worst-case reaction time (WCRT) of the synchronous
program~\cite{ju:PerformanceEsterel,roop:TightWCRT,Wang13ILP}.

Synchronous languages, which are based on the synchrony hypothesis,
include Esterel~\cite{berry2000foundations},
Lustre~\cite{halbwachs1991synchronous} and
Signal~\cite{leguernic1991programming}, where every program reaction
occurs with respect to a logical clock that \emph{tick}s.  A new
reaction starts at the beginning of a tick by taking a snapshot of the
input signals, computing the outputs using the user specified logic, and
emitting the output signals before the next tick starts. This is similar
to the scan cycle of a programmable logic controller~\cite{yoong2015model}.

\subsection{Synchronous HA (SHA)}
\label{sec:sha}

Because hybrid systems extend reactive systems (which are usually discrete)
with continuous dynamics, we need to make semantic adaptations to the
synchronous model to facilitate code generation.  We make the following
assumptions:

\begin{itemize}
\item All ODEs can be solved using analytic methods to compute their witness
  functions.
\item Execution is performed in discrete instants based
  on the synchronous approach and the duration of each instant is a
  fixed time $\delta$.
\item Due to the fairness assumption, the execution time
  spent in any location is always bounded.
\end{itemize}

\begin{figure}[ht]
  \centering
  \begin{tikzpicture}[transform shape, xscale=0.9]
\begin{axis}
[ xlabel={Time (in ticks)},
ylabel={Value of $x$},
axis y line = left,
axis x line = bottom,
extra y ticks={16,25},
extra tick style={grid=major}
]

\addplot[blue,  thick, smooth,
		domain=0:7] (x,x^2);

	 


\addplot[color=red,fill=red,only marks,mark=*] coordinates{(1,1)(2,4)(3,9)(4,16)(5,25)(6,36)(7,49)};

\node (WF) at (axis cs:2,40) {Witness Function}; 
\draw[->, thick] (WF) -- (axis cs:5.5,30.25);

\node (SA) at (axis cs:3.2,45)
	 {Synchronous Approximation}; 
\draw[->, thick] (SA) -- (axis cs:6,36);

\node (Vx4) at (axis cs:2.2,17)
	 {$x$ at tick 4};
\node (Vx5) at (axis cs:2.2,26)
	 {$x$ at tick 5};
	 
\end{axis}
\end{tikzpicture}
  \caption{An example visualisation of the synchronous approximation.
  	 Only observable at tick boundaries}
  \label{fig:sha-trace}
\end{figure}
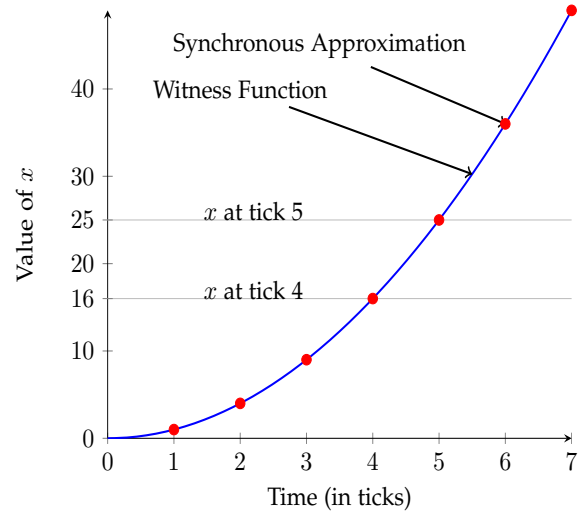

Based on these assumptions, Definition~\ref{def:sha} formalises the
concept of SHA corresponding to any WHA. A SHA provides an
\emph{under-approximation} of the behaviour of a WHA (a subset of
behaviours) due to the fact that in an SHA, the valuation of continuous
variables are made at discrete instants and the value remains constant
between two distinct valuations. This is visualised in
Figure~\ref{fig:sha-trace} and shows that the value of the
continuous variable $x$ is evaluated at discrete instants (or ticks)
and remains unchanged between two instants (or ticks).

We also assume the following notation. Let
$ValueInterval=\{[N_1, N_2]| N_1 \in \mathbb{N}~\wedge~N_2 \in
\mathbb{N}\}$.

\begin{definition}
  Given a well-formed hybrid automata \newline
  $WHA = \langle Loc, Edge, \Sigma, Init, Inv, Flow, Jump \rangle$ a SHA corresponding to
  this WHA is \newline
  $SHA = \langle Loc, Edge, \Sigma, Init, Inv, Switness, Jump, Step,$
  $Nsteps, BoundryCond \rangle$ where:
  \begin{itemize}
  \item $Step = \delta \in \mathbb{R}^+$: This specifies the duration of
    the synchronous instant.
  \item $Nsteps: Loc \rightarrow \mathbb{N}$: This specifies the maximum
    number of (logical) time steps that can be spent in any location. 
    $Nsteps(l) = k$ implies that
    during any execution of the SHA the time spent in
    $l$ must be less than or equal to $k \times \delta$.
  \item
    $Switness: Loc \times \mathbb{N} \times \mathbb{R} \times
    \mathbb{R}^m \rightarrow \mathbb{R}^m$:
    is the witness function that returns the valuation of all the
    continuous variables at any valid time step $k$ in a given location
    $l$. It also takes as input the time step $\delta$ and the initial
    value from which execution begins in the location.
  \item $BoundaryCond: (X \times Loc) \rightarrow ValueInterval$: This
    function defines the interval in which the boundary value of any
    continuous variable lies in a given location. The boundary value of
    a continuous variable $x$ is the value of the variable at time
    $0$, when execution starts in that location, i.e., $x(0)$.
  \end{itemize}
  \label{def:sha}
\end{definition}

A SHA is an abstraction of the corresponding WHA. It inherits all the
components of a WHA except that $Flow$ predicates are replaced by a
composite witness function $Switness$.
$Switness(l,k, \delta, i)=v_{k,l,i}$, which returns the evaluation of
the witness function in location $l$ at time step $k$. Here, $v_{k,l,i}$
is a vector representing the valuation of variables in $X$. We also
denote $v_{k,l,i}(x) \in \mathbb{R}$ to represent the valuation of the
continuous variable $x \in X$ in the $k$-th step in location $l$. We use
the shorthand $x[k, l, i]$ to denote $v_{k,l,i}(x)$ and the shorthand
$x[k]$ when the location $l$ and the initial value of $x$ in vector $i$,
itself denoted as $i(x)$ is clear from the context.

We also introduce some other new components:
$Step$, $Nsteps$, and $BoundaryCond$.  $Step$ specifies the duration of a
discrete instant or tick and $Nsteps$ maps every location to a natural
number indicating the worst-case number of steps possible in that
location during any execution of the SHA. $Step$ can be any value on the
positive real-number line, usually obtained via \textit{worst-case
  reaction time} (WCRT) analysis~\cite{ju:PerformanceEsterel,roop:TightWCRT,Wang13ILP}. 
 $Nsteps$ is computed statically
using an algorithm presented in
Section~\ref{sec:generation-sha}. $BoundaryCond(l,x)$ returns a closed
interval of the form $[N_1,N_2]$, which means that the boundary value of
$x(0)$ in location $l$ is in $[N_1,N_2]$. This mapping is
essential for computing the constants of integration and is explained in
Section~\ref{sec:generation-sha}.

The semantics of a \ac{SHA} is provided as a \ac{DTTS} in
Definition~\ref{def:dtts}. We assume that all transitions of a DTTS
trigger relative to the ticks of the logical clock of the synchronous
program.

\begin{definition}
  The semantics of a \newline
  $SHA = \langle Loc, Edge, \Sigma, Init, Inv, Switness, Jump, Step,$
  $Nsteps, BoundaryCond \rangle$ is a 
  $DTTS = \langle Q, Q^0, \Sigma, \rightarrow \rangle$ where

  \begin{itemize}
  \item The state-space is $Q$, where any state is of the form
    $(l, v, i, k)$ where $l$ is a location, $i$ is the initial valuation
    of the variables when execution begins in the location and $v$ is
    the valuation at the $k$-th instant.
  \item $Q^0 \subseteq Q$ where every $q^0 \in Q^0$ is of the form
    $(l, v^0, i, k)$ such that $v^0$ satisfies $Init(l)$.
  \item Transitions are of two types:
    \begin{itemize}
    \item \emph{Inter-location transitions} that lead to mode switches:
      These are of the form
      $(l, v, i, k) \stackrel{\sigma} \rightarrow (l', v', i', 0)$ if
      $(l, v, i, k) \in Q$, $(l', v', i', 0) \in Q$,
      $e=(l \stackrel{\sigma} \rightarrow l') \in Edge$ and $(v, v')$
      satisfy $Jump(e)$.
    \item \emph{Intra-location transitions} made during the execution in
      a given mode / location: These are of the form
      $(l, v, i, k) \rightarrow (l, v', i, k+1)$ if
      $(l, v, i, k) \in Q$, $(l, v', i, k+1) \in Q$, $(v, v')$ satisfy
      $Inv(l)$, $Switness(l,k,\delta,i)=v$ and $Switness(l,k+1,\delta,i)=v'$.
    \end{itemize}
  \end{itemize}
  \label{def:dtts}
\end{definition}

\subsection{Composition}
\label{sec:DTTS-par}

The developed synchronous semantics enables modular code generation. We
can compile each WHA separately (step 3) in Figure~\ref{fig:methodology}
and then the generated codes are linked together in step 4, which is
similar to the linking process in conventional compilation. We formalise
the synchronous parallel composition of SHAs using
Definition~\ref{def:parallel}. This facilitates the seamless linking
process, described in Section~\ref{sec:composition-sha}.


\begin{definition}
  \label{sec:sha-semant-comp}
  Given \newline 
  $SHA_1 = \langle Loc_1, Edge_1, \Sigma_1, Init_1, Inv_1, Switness_1, Jump_1,$ $Step_1, Nsteps_1, BoundaryCond_1 \rangle$  and its semantics \newline
  $DTTS_1 = \langle Q_1, Q^0_1, \Sigma, \rightarrow_1 \rangle$ and \newline
  $SHA_2 = \langle Loc_2, Edge_2, \Sigma_2, Init_2, Inv_2, Switness_2, Jump_2,$ $Step_2, Nsteps_2, BoundaryCond_2 \rangle$ and its semantics \newline
  $DTTS_2 = \langle Q_2, Q^0_2, \Sigma_2, \rightarrow_2 \rangle$ 
  and $Shared_v=X_1 \bigcap X_2$
  and $Shared_e=\Sigma_1 \bigcap \Sigma_2$ denoting a set of shared
  variables and events, respectively, which satisfy the following
  conditions:
      
  \begin{enumerate}
  \item All writes to shared variables $x \in Shared_{v}$ or the
    emission of shared events $e \in Shared_e$ are mutually exclusive.
  \item Any read to a shared variable $x \in Shared_{v}$ or a shared
    event $e \in Shared_e$ accesses the value of the variable / event in
    the previous tick, denoted $pre(x)/pre(e)$.
  \end{enumerate}

  $DTTS_1 || DTTS_2 = DTTS \langle Q, Q^0, \Sigma, \rightarrow \rangle$ where:
  \begin{itemize}
  \item The state-space is $Q \subseteq Q_1 \times Q_2$.
  \item $Q^0 \subseteq Q^0_1 \times Q^0_2$.
  \item Transitions $\rightarrow$ are of two types:
    \begin{itemize}
    \item \emph{Inter-location transitions} of the form:\\
      \begin{flushleft}
      \noindent (\SHArule{Inter-Inter})
      $(q_1, q_2) \stackrel{\sigma_1 \wedge \sigma_2} \rightarrow (q_1', q_2')$ where 
      $q_1=(l_1, v_1, i_1, k)$, $q_2=(l_2, v_2, i_2, k)$,
      $q_1'=(l_1', v_1', i_1', 0)$, $q_2'=(l_2', v_2', i_2', 0)$.
      $e_1 = (q_1 \stackrel{\sigma_1} \rightarrow q_1') \in Edge_1$ and
      $(v_1, v_1')$ satisfy $Jump(e_1)$.
      $e_2 = (q_2 \stackrel{\sigma_2} \rightarrow q_2') \in Edge_2$ and
      $(v_2, v_2')$ satisfy $Jump(e_2)$. \newline

      \noindent (\SHArule{Inter-Intra})
      $(q_1, q_2) \stackrel{\sigma_1} \rightarrow (q_1', q_2')$ where
      $q_1=(l_1, v_1, i_1, k)$, $q_2=(l_2, v_2, i_2, k)$,
      $q_1'=(l_1', v_1', i_1', 0)$, $q_2'=(l_2, v_2', i_2', 0)$.
      $e_1 = (q_1 \stackrel{\sigma_1} \rightarrow q_1') \in Edge_1$ and
      $(v_1, v_1')$ satisfy $Jump(e_1)$. $q_2, q_2' \in Q_2$ and $(v_2,
      v_2')$ satisfy $Inv(l_2)$. Finally, $i_2' = v_2$ and $v_2' = v_2$. \newline
        
      \noindent (\SHArule{Intra-Inter})
      $(q_1, q_2) \stackrel{\sigma_2} \rightarrow (q_1', q_2')$ where
      $q_1=(l_1, v_1, i_1, k)$, $q_2=(l_2, v_2, i_2, k)$,
      $q_1'=(l_1, v_1', i_1', 0)$, $q_2'=(l_2', v_2', i_2', 0)$.
      $e_2 = (q_2 \stackrel{\sigma_2} \rightarrow q_2') \in Edge_2$ and
      $(v_2, v_2')$ satisfy $Jump(e_2)$. $q_1, q_1' \in Q_1$ and $(v_1,
      v_1')$ satisfy $Inv(l_1)$. Finally, $i_1' = v_1$ and $v_1' = v_1$. \newline
      \end{flushleft}
      
    \item \emph{Intra-location transitions} of the form:\\
      \begin{flushleft}
      \noindent (\SHArule{Intra-Intra}) $(q_1, q_2) \rightarrow (q_1', q_2')$
      where \newline 
      $q_1=(l_1, v_1, i_1, k)$, $q_2=(l_2, v_2, i_2, k)$,
      $q_1'=(l_1, v_1', i_1, k+1)$, $q_2'=(l_2, v_2', i_2, k+1)$, and
      $q_1, q_1' \in Q_1$, $(v_1, v_1')$ satisfy $Inv(l_1)$,
      $Switness(l_1,k,\delta, i_1)=v_1$ and $Switness(l_1,k+1,\delta, i_1)=v_1'$. Similarly,
      $q_2, q_2' \in Q_2$, $(v_2, v_2')$ satisfy $Inv(l_2)$,
      $Switness(l_2,k,\delta, i_2)=v_2$ and $Switness(l_2,k+1,\delta, i_2)=v_2'$
      for any $\delta$.
      \end{flushleft}
    \end{itemize}
  \end{itemize}
  \label{def:parallel}
\end{definition}

In Definition~\ref{def:parallel}, the DTTS corresponding to two SHAs
composed in parallel is computed by composing their respective
DTTSs. The state-space $Q$ of the resultant DTTS is a subset of the
product of the state-space of the individual constituents. The initial
state-space $Q^0$ is also a subset of the product of the initial state-space 
of the constituents. The transition relation consists of two types
of transitions.  The inter-location transitions happen when any one of
the constituents or both constituents make an inter-location transition
(there are three different possibilities). \SHArule{Inter-Inter} states that both constituents can
take a discrete transition to new locations. 

\SHArule{Inter-Intra} states that when
the first constituent $Q_1$ takes an inter-location transition from
location $l_1$ to $l_1'$, $Q_2$ is also \textit{forced} to make such a
transition.  But, the resultant location of $Q_2$ does not change, i.e.,
$Q_2$ can only take a transition from $l_2$ to $l_2$.  Moreover, upon
taking the transition the initial value  $i_2$ is set to the
current valuation $v_2$, consequently implying that $v_2' = v_2$. Once
in the new state $(l_1', l_2)$ vectors $v_1$ and $v_2$ start evolving
according to their individual witness functions. \SHArule{Intra-Inter} is the
dual of \SHArule{Inter-Intra}. 
Finally, intra-location transition in the composition
(\SHArule{Intra-Intra}) happens only when both constituents make an intra-location
transition.


\section{Methodology for code generation}
\label{sec:code-gen}
\begin{figure}[t!]
  \centering
  \scalebox{0.7}{
\tikzstyle{decision} = [diamond, draw, fill=white!30, 
    text width=5em, text badly centered, node distance=3.2cm, inner sep=0pt]
\tikzstyle{file} = [rectangle, draw, fill=gray!20, 
    text width=5em, text centered, minimum height=4em]
\tikzstyle{process} = [rectangle, draw, fill=blue!15, 
    text width=5em, text centered, rounded corners, minimum height=4em]
\tikzstyle{line} = [draw, -latex', ultra thick]
\tikzstyle{state} = [draw, ellipse,fill=red!30, 
	node distance=3.5cm, text width=5em, 
	text badly centered,
    minimum height=2em]

\begin{tikzpicture}[node distance = 2cm, auto]
    \node [file] (HA) {{HA}};

    \node [decision, right of = HA] (D) {Is a WHA~? (step 1,
      Sec~\ref{sec:static-analysis-ha})};

    \node [state, right of = D] (E) {Invalid input};

    \node [file, right of = E, node distance=3cm] (FSM) {FSM/ C-code};

    \node [process, below of = D, node distance =3cm] (GSHA) {Generate
      \ac{SHA} (step 2, Sec~\ref{sec:generation-sha})};
    
    \node [file, below of = E, node distance =3cm] 
	    (SHA) {\ac{SHA}};
    
            \node [process, below of = FSM, node distance =3cm] (GFSM)
            {Generate backend code (step 3,
              Sec~\ref{sec:back-code-gener})};

	\path [line] (HA) -- (D);
	\path [line] (D)-- node[near start]{no}(E);
	\path [line] (D)-- node[near start]{yes}(GSHA);
	
	\path [line] (GSHA) -- (SHA);
	\path [line] (SHA) -- (GFSM);
	\path [line] (GFSM) -- (FSM);

\end{tikzpicture}    
  }
  \caption{Process for compiling a single \acf{HA}}
  \label{fig:1}
\end{figure}
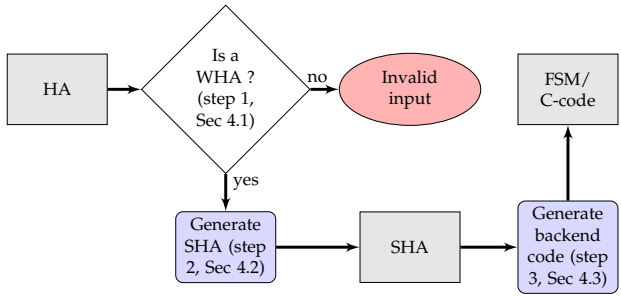

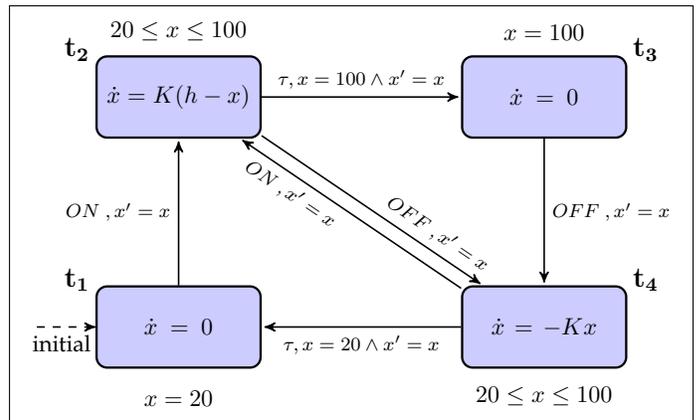
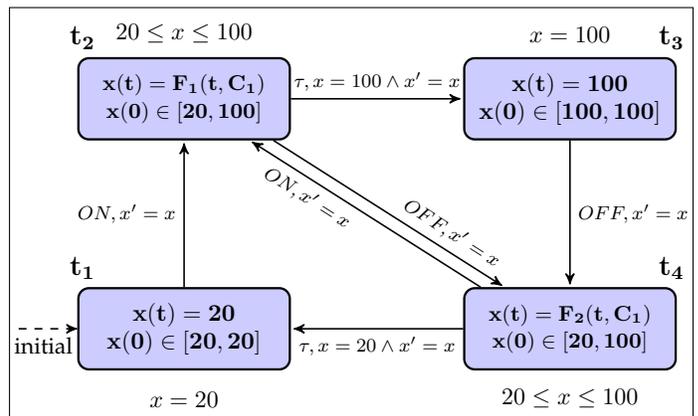
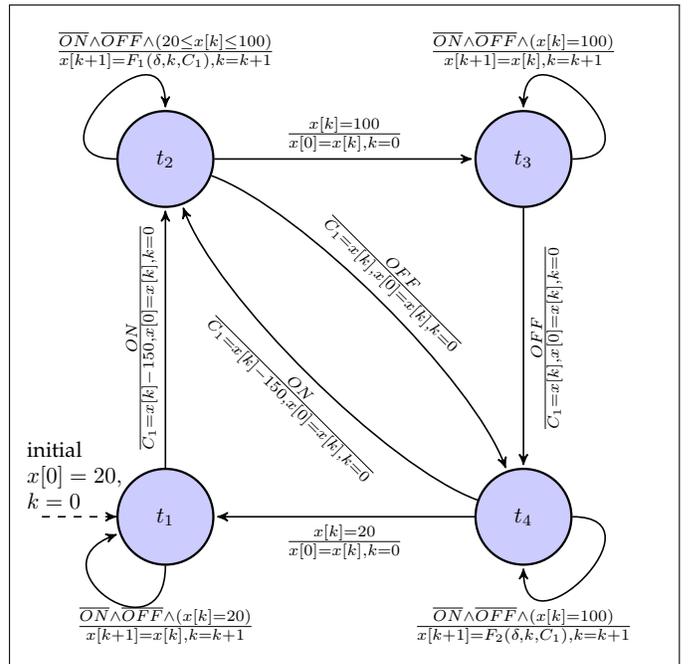
\begin{figure}[htbp]
  \centering
  {
	\centering
	\subfigure[\acf{HA}, see Definition~\ref{def:ha}. Reproduced from Figure~\ref{fig:waterTankHAtank}. Symbols $K$ and $h$ are constants with values $0.075$ and $150$, respectively. \label{fig:codeGenHA}
	]{
	\framebox[0.50\textwidth]{
		\begin{tikzpicture}[->,>=stealth',shorten >=1pt,auto,
			node distance=5.4cm,
			semithick,scale=0.9, transform shape]
			\tikzstyle{every state}=[rectangle,rounded corners,
			minimum height = 1.2cm, text width=2.2cm, text centered, fill=blue!20,draw=none,text=black, draw,line width=0.3mm]
			
			\node[state, 
			label={[shift={(0,-1.9)}]$ x = 20$}, 
			label={[shift={(-1.5,-0.2)}]\large $ \mathbf{t_1}$ }]
			(T1)  {$\dot{x}=0$};
			
			\draw[<-, dashed](T1.180) -- node[below]
                        {initial} ++(-1.0cm,-0cm);
			
			\node[state, 
			label={[shift={(0,0.1)}]$20\leq x \leq 100$}, 
			label={[shift={(-1.5,-0.2)}]\large $\mathbf{t_2}$   }]
			(T2) [node distance=3.4cm, above of=T1] 	 {$\dot{x}=K(h-x)$};
			
			\node[state, 
			label={[shift={(0,0.1)}]$ x = 100$}, 
			label={[shift={(1.5,-0.2)}]\large $\mathbf{t_3}$  }] 
			(T3) [right of=T2] 	 {$\dot{x}=0$};
			
			\node[state, 
			label={[shift={(0,-1.9)}]$20\leq x \leq 100$}, 
			label={[shift={(1.5,-0.2)}]\large $\mathbf{t_4}$  }]
			(T4) [right of=T1] 	 {$\dot{x}=-Kx$};

			\draw (T1) -- (T2) node [midway] 
			{\footnotesize $\inSignal{ON},x^\prime = x$};
			\draw (T2) -- (T3) node [midway] 
			{\footnotesize $\iSignal,x=100 \wedge x^\prime = x$};
			\draw (T3) -- (T4) node [midway, right] 
			{\footnotesize $\inSignal{OFF},x^\prime = x$};
			\draw (T4) -- (T1) node [midway] 
			{\footnotesize $\iSignal,x=20 \wedge x^\prime = x$};
			
			\draw (T2.335) -- (T4) node [midway, sloped] 
			{\footnotesize $\inSignal{OFF},x^\prime = x$};
			\draw (T4.155) -- (T2) node [midway, sloped] 
			{\footnotesize $\inSignal{ON},x^\prime = x$};
		\end{tikzpicture}
	  } 
	} 
	
	\subfigure[\label{fig:codeGenSHA} \acf{SHA}, see
        Definition~\ref{def:sha}. Flow predicates are described using
        witness functions. Values of the constants $h$ and $K$ are $150$
        and $0.075$, respectively. Furthermore, witness function
        $F_1=C_1 e^{-0.075\times t} + 150.0$ and
        $F_2=C_1 e^{-0.075\times t}$ ]{
          \framebox[0.5\textwidth]{
		\begin{tikzpicture}[->,>=stealth',shorten >=1pt,auto,
		node distance=5.7cm,
		semithick,scale=0.9, transform shape]
		\tikzstyle{every state}=[rectangle,rounded corners,
		minimum height = 1.2cm, text width=2.9cm, text centered, fill=blue!20,draw=none,text=black, draw,line width=0.3mm]
		
		\node[state, 
		label={[shift={(0,-1.9)}]$ x = 20$}, 
		label={[shift={(-1.5,0)}]\large $ \mathbf{t_1}$ }]
		(T1)  { $\mathbf{x(t)=20}$ $\mathbf{x(0) \in [20,20]}$};

		\draw[<-, dashed](T1.180) -- node[below] {initial}
                ++(-1cm,0cm);

		\node[state, 
		label={[shift={(0,0.1)}]$20\leq x \leq 100$}, 
		label={[shift={(-1.5,0)}]\large $\mathbf{t_2}$   }]
		(T2) [node distance=3.4cm, above of=T1]  	 {\small  $\mathbf{x(t)=F_1(t,C_1)}$  $\mathbf{x(0) \in [20,100]}$};
		
		\node[state, 
		label={[shift={(0,0.1)}]$ x = 100$}, 
		label={[shift={(1.5,0)}]\large $\mathbf{t_3}$  }] 
		(T3) [right of=T2] 	 {$\mathbf{x(t)=100}$  $\mathbf{x(0) \in [100,100]}$};
		
		\node[state, 
		label={[shift={(0,-1.9)}]$20\leq x \leq 100$}, 
		label={[shift={(1.5,0)}]\large $\mathbf{t_4}$  }]
		(T4) [right of=T1] 	 {\small  $\mathbf{x(t)=F_2(t,C_1)}$  
			$\mathbf{x(0) \in [20,100]}$};

		\draw (T1) -- (T2) node [midway] 
		{\footnotesize $ON,x^\prime = x$};
		\draw (T2) -- (T3) node [midway] 
		{\footnotesize $\iSignal,x=100 \wedge x^\prime = x$};
		\draw (T3) -- (T4) node [midway, right] 
		{\footnotesize $OFF,x^\prime = x$};
		\draw (T4) -- (T1) node [midway] 
		{\footnotesize $\iSignal,x=20 \wedge x^\prime = x$};

		\draw (T2.335) -- (T4) node [midway, sloped] 
		{\footnotesize $OFF,x^\prime = x$};
		\draw (T4.155) -- (T2) node [midway, sloped] 
		{\footnotesize $ON,x^\prime = x$};
		
		\end{tikzpicture}
		} 
	}
	
	\subfigure[\label{fig:codeGenDTTS} {Synchronous Witness Automata
          (SWA). We abuse the notation $x[k]$ to update the value of
          $x$, although $x[k]$ represents the valuation of the
          continuous variable $x$. This physical time
          $t=k \times \delta$, where $k$ is the logical tick and
          $\delta$ is the tick length.}]{ \framebox[0.49\textwidth]{
		\begin{tikzpicture}[->,>=stealth',shorten >=1pt,auto,node distance=5.6cm,
			semithick,scale=0.85, transform shape]
			\tikzstyle{every state}=[circle,rounded corners,
                        minimum height = 1.2cm, text width=1.2cm, text
                        centered, fill=blue!20,draw=none,text=black,
                        draw,line width=0.3mm]
			
			\node[state] (T1)  { $t_1$ };
			
			\draw[<-, dashed](T1.180) -- node[above, text
                        width =1.5cm] {initial $x[0]=20,$ $k=0$}
                        ++(-1.3cm,0cm);
			
			\node[state]
			(T2) [above of=T1] 	 {$t_2$};
			
			\node[state] 
			(T3) [right of=T2] 	 {$t_3$};
			
			\node[state] 
			(T4) [right of=T1] 	 { $t_4$};

			\draw (T1) -- (T2) node [pos=0.5, rotate =90, above] 
			{ $\frac{{ON}}{C_1=x[k]-150, x[0]=x[k],k=0}$};
			
			\draw (T2) -- (T3) node [midway] 
			{ $\frac{x[k]=100}{x[0]=x[k], k=0}$};
			
			\draw (T4) -- (T1) node [midway] 
			{ $\frac{x[k]=20}
				{x[0]=x[k],k=0}
				$};
			
			\draw (T3) -- (T4) node [midway,rotate =90, below] 
			{$\frac{ {OFF}}
				{C_1=x[k], x[0]=x[k],k=0} $};
			
			
			\draw (T2.180) to[out=180,in=90, distance=1.5cm] (T2.90);
			\draw (T2)  node[yshift=1.7cm]
			{ $\frac{\overline{ON}  \wedge \overline{OFF} \wedge (20 \leq x[k]  \leq 100)} {x[k+1]=F_1(\delta,k,C_1), k=k+1}   $ };
			
			\draw (T3.0) to[out=0,in=90, distance=1.5cm] (T3.90);
			\draw (T3)  node[yshift=1.7cm]
			{ $\frac{\overline{ON}  \wedge \overline{OFF} \wedge ( x[k]  = 100)} {x[k+1]=x[k],k=k+1}   $ };
			
			\draw (T4.0) to[out=0,in=-90, distance=1.5cm] (T4.-90);
			\draw (T4)  node[yshift=-1.7cm]
			{ $\frac{\overline{ON}  \wedge \overline{OFF} \wedge ( x[k]  = 100)}{x[k+1]=F_2(\delta,k,C_1),k=k+1}   $ };
			
			\draw (T1.-90) to[out=-90,in=200, distance=1.5cm] (T1.200);
			\draw (T1)  node[yshift=-1.7cm]
			{ $\frac{\overline{ON}  \wedge \overline{OFF} \wedge ( x[k]  = 20)}{x[k+1]=x[k],k=k+1}   $ };
			
			\draw (T2) to[out=-20,in=110, distance=1.5cm] (T4);
			\draw (T4)  node[yshift=3.7cm, xshift=-2cm, rotate =-45]
			{ $\frac{{OFF} }{C_1=x[k], x[0]=x[k],k=0}   $ };
			
			\draw (T4) to[out=160,in=-70, distance=1.5cm] (T2);
			\draw (T2)  node[yshift=-3.7cm, xshift=2cm, rotate =-45]
			{ $\frac{{ON}  }{C_1=x[k]-150, x[0]=x[k],k=0}   $ };
			
		\end{tikzpicture}
	  } 
	}

}

  \caption{The water tank component from the running example}
  \label{fig:codeGen}
\end{figure}

Figure~\ref{fig:1} outlines the approach used to compile a single HA.
The compilation process consists of three steps. We describe all three
steps, in
Sections~\ref{sec:static-analysis-ha}-\ref{sec:back-code-gener}, using
 the water tank HA, which has been
reproduced in Figure~\ref{fig:codeGenHA}.

\subsection{Static analysis of hybrid automata}
\label{sec:static-analysis-ha}

Given an HA, the first step (Figure~\ref{fig:1}) in the compilation
process is to statically determine if the well-formedness criteria
defined in Section~\ref{sec:wha} is respected. If the HA does not
respect the well-formedness criteria, then an error is generated.

\renewcommand{\algorithmiccomment}[1]{// #1}
\renewcommand{\algorithmicrequire}{\textbf{Input:}}
\renewcommand{\algorithmicensure}{\textbf{Output:}}
\begin{algorithm}
  \begin{algorithmic}[1]
    \REQUIRE HA ha 
    \ENSURE Boolean
    \STATE $gset \leftarrow \emptyset$ 
    \FORALL{$edges \in ha$} \label{alg:1:2}
    \FORALL{$guards \in edges$}
    \STATE $gset \leftarrow gset \cup guards$
    \ENDFOR
    \ENDFOR
    \FORALL{$loc \in ha$}
    \FORALL{$invs \in loc$}
    \STATE $gset \leftarrow gset \cup invs$
    \ENDFOR
    \ENDFOR
    \STATE \COMMENT{ \small check that all location invariants and jump conditions are    	of type $CV(X)$}
    \FORALL{$g \in gset$ }  
    \STATE $assert\ type(g) \in CV(X)$ \label{alg:1:3}\\
    \ENDFOR \label{alg:1:31}
    \FORALL{$loc \in ha$} \label{alg:1:4}
    \FORALL{$ode \in loc$}
    \IF{$solve\_ode(ode)$}\label{alg:1:6}
    \STATE $(R_1, R_2) \leftarrow solve(ode.rhs > 0)$\label{alg:1:6a} \footnotemark
    \STATE $(R'_1, R'_2) \leftarrow solve(ode.rhs < 0)$\label{alg:1:6b}
    \STATE $[N_1, N_2] \leftarrow get\_inv\_bounds(ode,loc)$\label{alg:1:6c}
    \IF{$(R_1,R_2) \cap [N_1, N_2] \neq \emptyset \wedge 
      (R'_1,R'_2) \cap [N_1, N_2] \neq \emptyset$}
    \STATE \COMMENT{ \small Slope of witness function changes sign}
    \STATE throw Exception (``Not a WHA'') \label{alg:1:7}
    \ELSE 
    \RETURN True
    \ENDIF
    \ELSE
    \STATE \COMMENT{\small No closed form solution, hence not a WHA}
    \STATE throw Exception(``Not a WHA'') \label{alg:1:8}
    \ENDIF
    \ENDFOR
    \ENDFOR\label{alg:1:5}
  \end{algorithmic}
  \caption{The algorithm to check the well-formedness criteria of a HA}
  \label{alg:1}
\end{algorithm}

The procedure to verify the well-formedness criteria is presented in
Algorithm~\ref{alg:1}, which takes an HA as input. Three well-formedness
criteria need to be guaranteed. First, the invariants and the
jump conditions need to be of the form $CV(X)$
(Definition~\ref{def:cvx}).  Lines~\ref{alg:1:2}-\ref{alg:1:31}
guarantee that this criterion is met. The second and third criteria
require that each ODE, in every location, of the HA should have a closed
form solution and should be monotonic. Lines~\ref{alg:1:4}-\ref{alg:1:5}
ensure that these criteria are satisfied.

Consider the running example HA -- the water tank system presented in
Figure~\ref{fig:codeGenHA}. Lines~\ref{alg:1:2}-\ref{alg:1:31} in
Algorithm~\ref{alg:1} collect all the invariant and jump conditions from
the locations and the edges, respectively. Once collected in set $gset$,
an assertion statement guarantees that all these conditions are of the
form $CV(X)$ (line~\ref{alg:1:3}).  Lines~\ref{alg:1:4}-\ref{alg:1:5}
iterate through each location of the HA. Upon visiting a location, all
ODEs within the location are solved symbolically
(line~\ref{alg:1:6}). If no closed form solution exists, then an exception
is generated (line~\ref{alg:1:8}).

Given that a closed form solution exists, we then guarantee that all ODEs
in a location are monotonic (not necessarily strictly monotonic). We use
the definition that any given (witness) function is considered monotonic
if and only if the first derivative of the function does not change
sign~\cite{rudin1987real}. Flow conditions evolve one or more ODEs
within a given location as long as the invariant on the location is not
violated. Hence, in our case, the definition of a monotonic function can
be made more specific: any given (witness) function is
monotonic if and only if its first derivative does not change sign within
the interval specified by the invariant(s) of the location.

The right hand side of the ODEs specify the first derivatives of the
witness functions. We need to ensure that the right hand side expression
of the ODE (the slope) does not change signs within the invariant
bounds. The lines~\ref{alg:1:6a}-\ref{alg:1:7} ensure that these
conditions are satisfied. Line~\ref{alg:1:6a} obtains the real number line
interval (denoted by $(R_1, R_2)$) such that the derivative of the
witness function is always greater than zero, i.e., an increasing
function. Line~\ref{alg:1:6b} obtains the real number line interval
(denoted by $(R'_1, R'_2)$), such that the first derivative of the witness
function is less than zero. Next, we obtain the invariant interval
(denoted $[N_1, N_2]$), bounding the value of the evolving variable in
the ODE, from the invariant(s) on the location. A non-empty interval
$(R_1,R_2)\cap [N_1, N_2]$ indicates that the witness function is
increasing within the location intervals. Similarly, a non-empty
interval $(R'_1, R'_2)\cap[N_1, N_2]$ indicates that the witness
function is a decreasing function within the location invariants. If
both sets are non-empty, the witness function increases and decreases
within the invariants specified on the location, and hence, the witness
function is not monotonic. 

Consider the running example in Figure~\ref{fig:codeGenHA} and
specifically consider location $t_2$. The steps to determine if the
witness function in location $t_2$ is monotonic are as follows: 
\begin{enumerate}
\item \textit{Compute the real number line interval for $x$ such that
    the right hand side of the ODE is strictly greater than zero}:
  \mbox{$x \in solve((0.075*(150-x)>0)$} results in
  \mbox{$x \in (-\infty, 150)$}
\item \textit{Compute the real number line interval for $x$ such that
    the right hand side of the ODE is strictly less than zero}:
  \mbox{$x \in solve((0.075*(150-x)<0)$} results in
  \mbox{$x \in (150, \infty)$}
\item \textit{Obtain the invariant bounds on $x$}: For location $t_2$,
  this interval is obtained from invariant: $20 \leq x \leq 100$. Hence,
  the invariant bounds are: $x \in [20, 100]$.
\item \textit{Check if the witness function is monotonic}: The
  intersection \mbox{$(-\infty, 150) \cap [20, 100] = [20, 100]$}, but
  \mbox{$(150, \infty) \cap [20, 100] = \emptyset$}, and hence, the
  witness function in location $t_2$ is an increasing (and monotonic)
  function.
\end{enumerate}

\footnotetext{$solve$ is a polymorphic function from Sympy (a python
  library) that can be used to solve many different types of
  problems. We make use of the polymorphic nature of $solve$ in
  Algorithms~\ref{alg:1} and \ref{alg:2}}

For the running example in Figure~\ref{fig:codeGenHA}, all the
well-formedness criteria are satisfied and hence, the water tank system
is a WHA.

\subsection{Generation of SHA}
\label{sec:generation-sha}

This section describes step 2 in the compilation procedure from
Figure~\ref{fig:1}. Once a given HA is is determined to be a WHA,
an SHA is generated from the WHA. This procedure translates all ODEs in
every location of the WHA into their closed form solutions (witness
functions). It also computes the worst-case bound
$Nsteps$ (if one exists) as defined in Definition~\ref{def:sha}.

\begin{algorithm}[t!]
  \begin{algorithmic}[1]
    \REQUIRE HA ha 
    \ENSURE SHA sha
    \FORALL{$loc \in ha$} 
    \FORALL{$ode \in loc$}
    \STATE $times \leftarrow \emptyset$
    \STATE $eq \leftarrow solve\_ode(ode)$ \label{alg:2:1}
    \STATE $ode \leftarrow eq$ \label{alg:2:8}
    \IF {$sign(\textit{diff}(eq, t)) > 0$} \label{alg:2:2}
    \STATE \mbox{$times \leftarrow times \cup
      solve(eq,min(Init),max(Inv))$}\label{alg:2:4}
    \ELSIF {$sign(\textit{diff}(eq, t)) < 0$}
    \STATE \mbox{$times \leftarrow times \cup
      solve(eq,max(Init),min(Inv))$}\label{alg:2:5}
    \ELSE
    \STATE raise Warning (``Fairness required'') \label{alg:2:6}
    \STATE $times \leftarrow \emptyset$
    \STATE break
    \ENDIF
    \ENDFOR
    \IF {$min(times) = \infty$}
    \STATE raise Warning (``Fairness required'') \label{alg:2:7}
    \ENDIF \label{alg:2:3}
    \ENDFOR
    \RETURN this
  \end{algorithmic}
  \caption{The algorithm to generate a SHA from a WHA}
  \label{alg:2}
\end{algorithm}

The procedure to generate the SHA from a WHA is presented in
Algorithm~\ref{alg:2}. The algorithm visits every location in the
WHA. For each ODE in the location, a witness function is obtained
(line~\ref{alg:2:1}). Next, the algorithm attempts to determine the 
maximum time that
the HA will spend in every location (lines~\ref{alg:2:2}-\ref{alg:2:3})
in the worst-case. The algorithm first detects the slope of each ODE
($ode$ in Algorithm~\ref{alg:2}) for any given location ($loc$ in
Algorithm~\ref{alg:2}), by differentiating the witness function with
respect to time $t$ (line~\ref{alg:2:2}). If the slope is increasing
(line~\ref{alg:2:2}) then the witness function ($eq$ in
Algorithm~\ref{alg:2}) is solved for $t$. We do the reverse for an ODE
with decreasing slope (line~\ref{alg:2:5}). If the slope is constant, in
case of a non-changing ODE, a warning is raised stating that $egress$
(fairness) condition is needed to make progress out of the location
(line~\ref{alg:2:6}). Given the various worst-case times in the set
$times$, for all ODEs in a location, the minimum value amongst $times$
needs to be bounded to guarantee at compile time that $Nsteps$ is
bounded, otherwise an egress condition is needed to make progress out of
the location (line~\ref{alg:2:7}).

Consider the example of the SHA shown in
Figure~\ref{fig:codeGenSHA}. The algorithm visits each location in the
WHA (Figure~\ref{fig:codeGenHA}) and replaces each ODE with its
equivalent closed form solution (Algorithm~\ref{alg:2},
line~\ref{alg:2:8}). The closed form solution of all ODEs for the water
tank is shown in Figure~\ref{fig:codeGenSHA}, given
$h=150$ and $K=0.075$, respectively.

Consider location $t_1$ in
Figure~\ref{fig:codeGenSHA}. Line~\ref{alg:2:2}, in
Algorithm~\ref{alg:2}, differentiates the witness function with respect
to time ($t$), resulting in a sign of $0$, indicating a non-changing
witness function. Hence, a warning that an $egress$ transition is needed
to make progress out of location $t_1$ is raised. This is expected,
because as seen in Figure~\ref{fig:codeGenHA}, the ODE $\dot{x} = 0$
does not change the value of $x$ and the invariant $x = 20$ holds, hence
progress out of location $t_1$ is only possible upon reception of
($egress$) event \verb$ON$.

Similarly, differentiating the witness function with respect to time in
location $t_4$, gives the solution $-0.075 \times C1$, where $C1$ is the
constant of integration. In our framework, the programmer annotates
every location with the possible range of (initial) values that the
continuous variable(s) might take upon entering the location. For
example, $BoundaryCond(t_4,x)$ is denoted as $x(0) \in [20, 100]$ for
location $t_4$\footnote{The possible initial value intervals can be
  computed automatically from the HA, but describing this is out of the
  scope of this paper.}. Given that the initial value interval is
positive, the slope of the witness function is detected to be negative
(since C1 can only take a positive value) and hence, the witness
function in location $t_4$ is a decreasing function. Once this is
deduced, the worst-case time spent in location $t_4$ is computed at
line~\ref{alg:2:5}, in Algorithm~\ref{alg:2}, by substituting the
initial value of $x$ as the maximum value from the
\textit{ValueInternal}, i.e., $x(0) = max(20, 100) = 100$ and the final
value of $x$ as the minimum value from amongst the invariant on $x$ in
location $t_4$, i.e., $x(t) = min(20, 100) = 20$.

Given that the witness function $x(t)$ is a decreasing function with an
initial value of $100$, we compute, using $solve$ on line~\ref{alg:2:5},
the time $t$ it takes for function $x(t)$ to reach its final value
$20$. This time $t$, if it can be computed, gives the maximum time that
the system will remain in location $t_4$ before making progress to a new
location. The \textit{solve} procedure proceeds as follows:

\begin{enumerate}
\item \textit{Compute the value of the constant of integration}: For
  location $t_4$, $x(t) = C_1 \times e^{-0.075\times t}$ as shown in
  Figure~\ref{fig:codeGenSHA}. First of all, we compute the constant of
  integration, using the initial value
  $x(0) = 100 \Rightarrow 100 = C_1 \times e^{-0.075 \times 0}$.
  Therefore, $C_1 = 100$ and hence $x(t) = 100 \times e^{-0.075 \times
    t}$.

\item \textit{Computing time $t$}: Once, we have computed the constant
  of integration, we can easily compute the time $t$ 
  required for $x(t)$
  to decrease from $100$ to $20$
  as follows:
  $x(t) = 20 \Rightarrow 20$. Therefore, $20 = 100 \times e^{-0.075 \times t}$. Hence
  $t = \frac{ln(\frac{20}{100})}{-0.075}, t \approx 214.6$. 
\end{enumerate}

The system will make progress out of location $t_4$, in the worst-case,
after approximately, $214.6$ units of time. Similarly, we can bound the
worst-case time spent in a location with increasing, rather than
decreasing witness functions. In case a location consists of multiple
witness functions, the worst-case time spent in that location is the
minimum amongst the worst-case times computed for all the witness
functions in that location. If the minimum amongst all the worst-case
times is $\infty$, then a warning stating that an egress transition is
necessary to make progress out of a location is generated as seen on
line~\ref{alg:2:7} in Algorithm~\ref{alg:2}.

\subsection{Backend code generation}
\label{sec:back-code-gener}

Finally, we describe the last step, step 3, in the compilation procedure
presented in Figure~\ref{fig:1}. After the verification of WHA
requirements and the generation of the SHA, backend code is
generated. In the backend, the SHA is represented as a \emph{Synchronous
  Witness Automata (SWA)}, which captures the behaviour of the SHA as a
synchronous state machine. The SWA is a discrete variant which, when
executed, produces the desired behaviour as a DTTS
(Definition~\ref{def:dtts}).  During the execution, a task known as
\emph{saturation} may also be needed while taking discrete
transitions. This is formalised in
Section~\ref{sec:saturation-function}.  The SWA is formalised using
Definition~\ref{def:swa}.  We start by defining $CV(X[k])$ in
Definition~\ref{def:cvxk} that is essential for the definition of the
transition relation of an SWA.
  
\begin{definition}

Let $x[k]$ denote the $k$-th ($k\in N$) valuation of any variable
 $x \in X$ and let $X[k]$ denote the set of all such $k$-valuations
of variables in $X$.
  
We denote $CV(X[k])$ the set of
  constraints over $X[k]$ : \newline
  $g:=x[k]< n | x[k] \leq n| x[k] > n | x[k] \geq n | g \wedge g$ where,
  $n \in \mathbb{N}$.
  \label{def:cvxk}
\end{definition}

\begin{definition}
  A synchronous witness automata (SWA) \newline
  ${\cal{S}} = \langle S, S_0, \Sigma, X, Updates, \rightarrow \rangle$,
  which corresponds to the \newline
  $SHA = \langle Loc, Edge, \Sigma, Init, Inv, Switness, Jump, Step,$
  $ Nsteps, BoundaryCond \rangle$ where

  \begin{itemize}
  \item $S$ denotes the set of states of $\cal{S}$ and $S=Loc$.
  \item $S_o$ is a subset of initial states.
  \item $\Sigma$ is a set of events.
  \item $X$ is a set of continuous variables 
  \item $Updates$ represent a set of updates. A given update may capture
    the emission of an output event, the update of a state variable in a
    given tick using its witness function or the initialization of an
    integration constant or the time instant.
  \item
    $\rightarrow \subseteq S \times {\cal{B}}(\Sigma) \times CV(X[k])
    \times 2^{Updates} \times S$
    represents the transition relation. Here ${\cal{B}}(\Sigma)$ denotes
    the set of Boolean formulas over $\Sigma$.
  \end{itemize}
  \label{def:swa}
\end{definition}
  
We use the shorthand, $(s,\frac{e \wedge g}{u},s')$ to represent
transitions.  Here, $e \in {\cal{B}}(\Sigma)$, $g \in CV(X[k])$ and
$u \in 2^{Updates}$. 

\ignore{This
transition is enabled when the antecedent (the numerator of the
fraction in the transition relation, also called the \textit{guard}) is
satisfied. When the transition is enabled, the consequent (the
denominator of the fraction in the transition relation, also called the
\textit{updates}) are enabled.}

The generated C-code is an SWA.  The SWA for the water tank generated
from the SHA (Figure~\ref{fig:codeGenSHA}) is shown in
Figure~\ref{fig:codeGenDTTS}. Every location in the SHA has an
equivalent state in the SWA, as shown in
Figure~\ref{fig:codeGenDTTS}. Furthermore, the witness functions in the
locations of the SHA are moved to transitions in the SWA. For example,
the SWA self-transition from state $t_2$ to $t_2$ represents the
evolution of the continuous variable $x$ in discrete steps
$k \in [0, Nsteps\blue{)}$.  In the SWA of Figure~\ref{fig:codeGenDTTS},
we label the transitions with the form:
$\frac{antecedent}{consequent}$. The self-transition from state $t_2$ to
state $t_2$ is labeled as: {
  $\frac{\overline{ON} \wedge \overline{OFF} \wedge (20 \leq x[k] \leq
    100)}{x[k+1]=F_1(\delta,k,C_1), k=k+1} $},
which states that while the value of $x$ at tick $k$ is between $20$ and
$100$ \textit{and} no events (\verb$ON$ or \verb$OFF$) are detected, $x$
evolves to the new value depending upon the witness function
$F_1$. $C_1$ is the constant of integration. The transition guard has a
one-to-one correspondence with the invariant condition on location $t_2$
in Figure~\ref{fig:codeGenSHA}. We add self-transitions to all states of
the generated SWA, which evolve all the continuous variables in the
corresponding location in the SHA. The generated SWA also consists of
all the edges from the SHA, e.g., transition
$(t_1, \frac{{ON} \wedge x[k]=20}{C_1=x[k]-150, x[0]=x[k], k=0},t_2 )$,
which corresponds to the jump and updates on the transition between
locations $t_1$ and $t_2$ in the corresponding SHA in
Figure~\ref{fig:codeGenSHA}.

\begin{figure}[h!]
  \centering
\begin{Verbatim}
double t4_ode_1(double d, double k, double C1){
   return C1*exp(-0.075*d*k);
}
double t4_init_1(double x_u){
   return = x_u;
}
enum states watertankR(enum states cstate, 
                       enum states pstate){
  switch (cstate){
  case (t1):
    if(x == 20 && !ON && !OFF){
      if (pstate != cstate) C1 = t1_init_1(x);
      x_u = t1_ode_1(C1);
      k=k+1;
      cstate = t1;
    }
    else if(!OFF){
      k=1; cstate=t2; x_u = x;
    }
    break;
  case (t2): ...
  case (t3): ...
  case (t4):
    if(x >= 20 && x <= 100 && !ON && !OFF){
      if (pstate != cstate) C1 = t4_init_1(x);
      x_u = t4_ode_1(d, k, C1);
      k=k+1;
      cstate = t4;
    }
    else if(ON){
      k=0; cstate=t2; x_u = x;
    }
    else if(x == 20){
      k=0; cstate=t1; x_u = x;
    }
    break;
  }
  return cstate;
}
\end{Verbatim}
\caption{C-code generated for the SHA in Figure~\ref{fig:codeGen}(c)}
  \label{fig:4}
\end{figure}

The C-code representing the SWA in Figure~\ref{fig:codeGenDTTS} is
shown in Figure~\ref{fig:4}. We have only shown the states $t_1$ and
$t_4$ along with their associated transitions and functions. There is a
literal one-to-one correspondence between the SWA in
Figure~\ref{fig:codeGenDTTS} and the C-code in Figure~\ref{fig:4}. We
use the variables $x$ and $x\_u$ to denote the $x[k]$ and $x[k+1]$
valuations of the continuous variable $x$, respectively. The witness
functions are represented as functions in C (e.g., lines~1-10 in
Figure~\ref{fig:4}).

The witness functions are incomplete, in the sense that the value of the
constant of integration (e.g., $C1$ on line~3) needs to be
computed. Computing this constant of integration is equivalent to
solving the \textit{initial value problem}. It is well known that the
value of the constant of integration, in the witness function, depends
upon the initial value of the witness function. The initial value, at
time $0$, of any witness function is either: (1) the specified initial
value of the continuous variable that the witness function updates, as
shown in Figure~\ref{fig:codeGenDTTS} with the dashed arrow, or (2) the
updated value of the continuous variable on an \texttt{Inter} location
transition.


Consider the two transitions, one from location $t_2$ to $t_4$ and the
other from $t_4$ to $t_2$ in Figure~\ref{fig:codeGenDTTS}. Variable $x$
is updated by the witness function $F_2$ on the self-transition in state
$t_4$. Function $F_2$ takes as input arguments the current tick number,
$k \in \mathbb{N}$, the step size of the tick,
$\delta \in \mathbb{R}^+$, and the constant of integration. The argument
$k$ starts from $0$ and $\delta$ is a constant. We determine the value
of the constant of integration when entering state $t_4$, by making
$C_1$ the subject. Since
$F_2 = C_1 \times e^{\delta \times k \times -0.075}$, we set $k=0$ and
making $C_1$ the subject gives $C_1 = F_2(0)$, i.e., $C_1$ takes the
value of $x[0]$, which as we know from Figure~\ref{fig:codeGenSHA} is
the final value of $x$ in the previous state, since $x' = x$. Hence, we
get $C_1 = x[k]$. In Figure~\ref{fig:4}, the constant of integration
\texttt{C1} is updated whenever any state is entered for the first time
(line~16) this corresponds to the update of the constants of integration
on the transitions in Figure~\ref{fig:codeGenDTTS}.

\subsection{Saturation}
\label{sec:saturation-function}

\begin{figure*}[t]
	\centering
	\begin{minipage}{0.5\linewidth}
	\centering
	
\subfigure[\label{fig:satEx1} Case 1: an increasing function and it does not need saturation]{
\framebox[0.96\textwidth]{
	\begin{tikzpicture}[->,>=stealth',shorten >=1pt,auto,
		node distance=4.4cm,
		semithick,scale=0.9, transform shape]
		\tikzstyle{every state}=[rectangle,rounded corners,
		minimum height = 1.2cm, text width=2.2cm, 
		text centered, fill=blue!20,
		draw=none,text=black, draw,line width=0.3mm]

		\node[state, 
		label={[shift={(0,-1.9)}]$x \leq 120$}, 
		label={[shift={(-1.5,-0.2)}]\large $\mathbf{t_1}$   }]
		(T2) [node distance=4.4cm] 	 {$\dot{x}=0.2x$};

		\node[state, 
		label={[shift={(0,-1.9)}]$x \geq 100$}, 
		label={[shift={(1.5,-0.2)}]\large $\mathbf{t_2}$  }]
		(T4) [right of=T2] 	 {$\dot{x}=0$};
		
		\draw (T2) -- (T4) node [midway, sloped] 
		{\footnotesize $A,x>100$};
		
		
		\draw[<-, dashed](T2.180) -- node[left,text
	       width=1.5cm] {intial $x(0)=50$} ++(-1.2cm,+0cm);
	        
	\end{tikzpicture}
}
}

\subfigure[\label{fig:satEx2} Case 2:  due to equality there is a need for saturation]{
\framebox[0.96\textwidth]{
	\begin{tikzpicture}[->,>=stealth',shorten >=1pt,auto,
	node distance=4.4cm,
	semithick,scale=0.9, transform shape]
	\tikzstyle{every state}=[rectangle,rounded corners,
	minimum height = 1.2cm, text width=2.2cm, 
	text centered, fill=blue!20,
	draw=none,text=black, draw,line width=0.3mm]

	\node[state, fill=gray!20, node distance=3.5cm,
	label={[shift={(0,-1.9)}]$ y<50$}, 
	label={[shift={(-1.5,-0.2)}]\large $ \mathbf{t_3}$ }]
	(B1)[below of= T2]  {$\dot{y}=8.5$};
	
	\node[state, fill=gray!20,
	label={[shift={(0,-1.9)}]$y >= 50$}, 
	label={[shift={(1.5,-0.2)}]\large $\mathbf{t_4}$  }]
	(B4) [right of=B1] 	 {$\dot{y}=0$};

	\draw[<-, dashed](B1.180) -- node[left,text
	width=1.5cm] {intial $y(0)=20$} ++(-1.2cm,+0cm);

	\draw (B1) -- (B4) node [midway] 
	{\footnotesize $\outSignal{\iSignal,y = 50}$};

	\end{tikzpicture}
}
}

\subfigure[\label{fig:satEx3} Case 3:  a decreasing function and it does not need saturation]{
\framebox[0.96\textwidth]{
	\begin{tikzpicture}[->,>=stealth',shorten >=1pt,auto,
	node distance=4.4cm,
	semithick,scale=0.9, transform shape]
	\tikzstyle{every state}=[rectangle,rounded corners,
	minimum height = 1.2cm, text width=2.2cm, 
	text centered, fill=blue!20,
	draw=none,text=black, draw,line width=0.3mm]

	\node[state, fill=red!20, node distance=3.5cm,
	label={[shift={(0,-1.9)}]$ z>-30$}, 
	label={[shift={(-1.5,-0.2)}]\large $ \mathbf{t_5}$ }]
	(C1)[below of= B1]  {$\dot{z}=-5.5$};
	
	\node[state, fill=red!20,
	label={[shift={(0,-1.9)}]$z \leq -10$}, 
	label={[shift={(1.5,-0.2)}]\large $\mathbf{t_6}$  }]
	(C2) [right of=C1] 	 {$\dot{z}=0$};

	\draw[<-, dashed](C1.180) -- node[left,text
	width=1.5cm] {intial $z(0)=0$} ++(-1.2cm,+0cm);

	\draw (C1) -- (C2) node [midway] 
	{\footnotesize $A,z<-10$};

	\end{tikzpicture}
}
}

\end{minipage}
\qquad
\begin{minipage}{0.45\linewidth}
 \subfigure[\label{fig:satPlot} The behaviours of the three example HAs 
 are depicted using solid lines. 
 Our synchronous approximations are depicted 
 using dashed lines. Each tick is one second long.]{
  \framebox[0.96\textwidth]{
	\begin{tikzpicture}[transform shape, xscale=0.93, yscale=1]
	\begin{axis}
	[ xlabel={Time (in seconds)},
	ylabel={Values of $x,y,z$},
	axis y line = left,
	axis x line = bottom,
	xmax=7,
	ymin=-50,   
	ymax=130,
	extra y ticks={120,100,80,50,20,0,-10,-30},
	extra tick style={grid=major}
	]
	
	\addplot[domain=0:4.4, blue, ultra thick,] 
	{50*e^(0.2*x)}; 
	\addplot[blue, ultra thick, smooth,
			domain=4.4:7] (x,120);
			
	\addplot[color=red,fill=red,only marks,mark=*] coordinates{(0,50)(1,61.07)(2,74.59)(3,91.11)(4,111.27)(5,120)(6,120)(7,120)};

	\addplot[gray, ultra thick, smooth,
			domain=0:3.52] (x,8.5*x+20);
	\addplot[gray, ultra thick, smooth,
		domain=3.5:7] (x,50);

	\addplot[color=red,fill=red,only marks,mark=*] 
	coordinates{(0,20)(1,28.5)(2,37)(3,45.5)(4,50)(5,50)(6,50)(7,50)};
			
	\addplot[red, ultra thick, smooth,
			domain=0:5.45] (x,-5.5*x);
	\addplot[red, ultra thick, smooth,
			domain=5.45:7] (x,-30);

	\addplot[color=red,fill=red,only marks,mark=*] 
	coordinates{(0,0)(1,-5.5)(2,-11)(3,-16.5)(4,-22)(5,-27.5)(6,-30)(7,-30)};

	
	%
		 
	\node[pin=170:{ (4,111.2)}] at (axis cs:4,111) {};
	
	\node[pin=180:{ (3,45.5)}] at (axis cs:3,45.5) {};
	\end{axis}
	\end{tikzpicture}
	}
}
\end{minipage}
	\caption{The need for saturation depends on the
	location invariant, the guard in HA, and the step size. 
	Out of  the three cases, only Case~2 requires saturation,
	see  Figure~\ref{fig:satPlot}. }
	\label{fig:saturation}
\end{figure*}
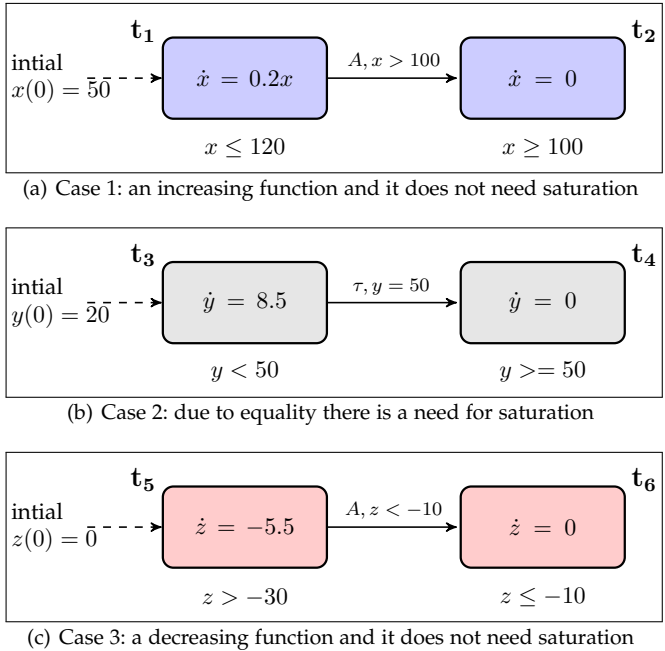
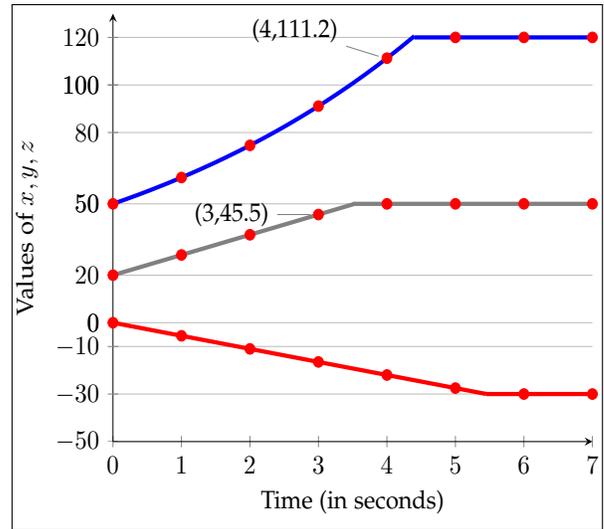

Due to the discrete/synchronous valuation of variables, when control
remains in one location the variable may
reach a value which will never satisfy the guard condition of the egress
transition. This phenomena depends on the location invariant and
the guard and may not happen frequently. Figure~\ref{fig:saturation}
illustrates three separate cases.

\noindent \textbf{Case~1}: Consider the step size $\delta$ as 1,
 the invariant at
location $t_1$ in Figure~\ref{fig:satEx1}  is $x\leq 120$ and the jump condition is
$A \wedge x >100$. 
According to the non-deterministic 
semantics of hybrid automata,
the value of $x$, as it leaves location $t_1$, 
can be in the interval $(100, 120]$ when the signal~$A$ is present.
The value of $x$ is plotted against 
time in Figure~\ref{fig:satPlot} (top).
In the synchronous approximation (shown as points),
we have two scenarios.
(1) If the transition is taken at the $5^{th}$ tick 
(time equals $4$ seconds), 
the value of $x$ as it leaves location~$t_1$ 
is $111.2$, respectively.
(2) Otherwise, at the $6^{th}$ tick (time equals $5$ ) 
the  value of $x$ is saturated to $120$
and is forced to exit location~$t_1$.
Note that the trace due to the synchronous approximation
is different but, still a valid trace of the \ac{HA}. 

This example satisfies the
	 restrictions proposed by Alur~et al.~\cite{alur2003generating}
	 and will be accepted for code generation.
	 Here, the duration between the occurrence of
	 the jump condition evaluating to true and the occurrence of the 
	 invariant evaluating to false is longer than
    the tick length (sampling period).
    This restriction ensures that there is always 
    at least one valid state where a discrete transition can be taken.
    However, the above restriction may not always hold as illustrated 
    in the following case.

\noindent \textbf{Case~2}: Given the step size as 1, 
Figure~\ref{fig:satEx2} illustrates a 
case where  we must saturate. 
Figure~\ref{fig:satPlot} shows the value of $y$ increasing steadily
to $50$ and then remaining at $50$. 
In contrast to Case~1,
 for location $t_3$ the duration between 
 the guard ($y=50$) evaluating to true and the occurrence of the
  invariant evaluating to false ($y<50$) is zero.
  This violates the restrictions proposed in~\cite{alur2003generating}
  and this HA would not be accepted in their framework. 
  In Simulink\textsuperscript{\textregistered}, this HA may not be simulated correctly 
  because of the equality check. In fact, for two out of seven 
  benchmarking examples, we observed incorrect behaviour in Simulink\textsuperscript{\textregistered}
  (see Table~\ref{tab:vsAlur} in Section~\ref{sec: results}).
  
  In the synchronous approximation, we accept this HA 
  because of the proposed \emph{saturation} technique.
  In our approach, during the $4^{th}$ tick the value of $y$ 
  is below $50$ and the guard condition is not satisfied. 
  During the $5^{th}$ tick the value of $y$ is above $50$ 
  and without saturation
  this leads to an unreachable state because neither the
   invariant ($y < 50$) of  $t_3$ nor the guard
   ($y=50$) are valid.
   To address this problem, at the start of the $5^{th}$ tick,
    we saturate the
   value of $y$ to exactly $50$. Hence, a value greater
   than $50$ is not observed between ticks $4$ and~$5$.
   By using saturation, unlike~\cite{alur2003generating}, 
   we are agnostic to the step size (sampling period).

\noindent \textbf{Case~3}: Similar to Case~1,
Figure~\ref{fig:satEx3} illustrates a 
decreasing function.
Once again, the duration between the occurrence of 
the guard evaluating to true and the occurrence of the 
 invariant evaluating to false is longer than
 the step size (sampling period). This restriction ensures that there always exists 
 at least one valid state where a discrete transition can be taken.
This HA is accepted for code generation by~\cite{alur2003generating}
and by our tool~\ourTool.

 Definition~\ref{def:saturation} formalises this \emph{saturation} technique.
 

\begin{definition}
  In any discrete time instant $k$, when execution makes a discrete
  switch from state $l$ to $l'$ in the SWA, the valuation of all
  continuous variables $X[k]$ either satisfy the guard condition
  $g \in CV(X[k])$ or are set to a suitable value such that $g$ is
  satisfied. This is termed as \emph{saturation}.
  \label{def:saturation}
\end{definition}

According to the above definition, during the execution of the backend code,
we must decide dynamically when to saturate and also must decide on the
correct value. This decision is based on the following Lemma that
ensures that the value of $x[k]$ that satisfies the guard always exists
in the current discrete step.

\begin{lemma}
  It is always possible to uniquely determine the saturation value for
  any continuous variable at time instant $k$ when the state (location)
  switch from $l$ to $l'$ is to be taken in a SWA.
\label{le:saturation}
\end{lemma}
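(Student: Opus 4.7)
The plan is to exploit the two structural restrictions of WHA (monotonic witness functions on invariant intervals and guards drawn from $CV(X)$) together with the continuity of analytic ODE solutions. Fix a transition $e=(l\stackrel{\sigma}\rightarrow l')$ whose guard $g\in CV(X[k])$ triggers at tick $k$, and consider any continuous variable $x\in X$ that appears in $g$. Because $g$ is a conjunction of atomic constraints of the shape $x\bowtie n$ with $\bowtie\in\{<,\le,>,\ge\}$ and $n\in\mathbb{Q}$, the set of values of $x$ satisfying $g$ projects onto a (possibly half-infinite) interval $I_x\subseteq\mathbb{R}$ whose endpoints are the finite constants appearing in those conjuncts.

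First I would observe that, by Algorithm~\ref{alg:1}, the witness function $t\mapsto x(t)$ associated with the closed-form solution in location $l$ is continuous and monotonic on the interval determined by $Inv(l)$. Consequently, as $t$ ranges over $[(k-1)\delta,k\delta]$ the trajectory of $x$ traces a connected, monotone arc in $\mathbb{R}$. Saturation is needed precisely when $x[k-1]$ fails to satisfy some atomic conjunct $x\bowtie n$ of $g$ while $x[k]$ either satisfies it strictly or has already overshot past the threshold $n$; in the case where $x[k]$ already lies in $I_x$ no saturation is needed and the saturation value is simply $x[k]$. Otherwise, monotonicity together with the Intermediate Value Theorem yields a unique $t^{\ast}\in((k-1)\delta,k\delta]$ at which the continuous witness function equals $n$, and this $n$ is the only candidate for the saturated value of $x$ consistent with the sampled trajectory.

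Next I would lift this pointwise argument to the full conjunctive guard. Since $g$ is a conjunction, the saturated valuation must lie in $\bigcap_x I_x$. Monotonicity guarantees that for each variable $x$ at most one endpoint of $I_x$ is crossed between ticks $k-1$ and $k$ (the monotonic trajectory cannot cross the same threshold twice), so the per-variable saturation value is uniquely forced by the first conjunct violated by $x[k]$. Because the endpoints are the constants appearing syntactically in $g$, this value can be read off directly from the guard without any numerical search.

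The main obstacle I anticipate is the pathological case where $x[k-1]$ and $x[k]$ both fail the same atomic conjunct (i.e., the threshold is ``missed entirely'' in the current step, so no boundary crossing occurred in $((k-1)\delta,k\delta]$). The plan to dispatch this case is to appeal to the monotonicity check performed in Algorithm~\ref{alg:1} together with the $BoundaryCond$ interval: if the switch from $l$ to $l'$ is actually scheduled at tick $k$, then the synchronous tick $k$ is the unique tick at which the continuous trajectory first meets the guard boundary, so exactly one threshold of $g$ is crossed in $((k-1)\delta,k\delta]$ and the argument of the previous paragraph applies. A secondary subtlety is ensuring that the saturated valuation still satisfies $Inv(l)$ at the instant of the switch; this follows because the monotonic trajectory has not yet exited $Inv(l)$ at $t^{\ast}\le k\delta$, so the threshold value lies in the invariant interval. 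Combining these observations, the saturation value is existent and unique, proving the lemma.
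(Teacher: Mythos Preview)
Your proposal is correct and follows essentially the same approach as the paper: both arguments rest on the WHA requirements that witness functions are monotonic and continuous (differentiable), from which the existence of the saturation value in the interval $[(k-1)\delta,\,k\delta]$ follows. The paper's proof is a terse three-observation sketch that does not spell out the Intermediate Value Theorem step or the per-variable uniqueness argument you give, so your version is in fact a more careful rendering of the same idea.
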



\begin{proof}
The proof of this lemma follows from the following observations.

\begin{itemize}
\item Observation 1: All witness functions $x(t)$ for any $x \in X$ are
  monotonic in every location (WHA requirement).
\item Observation 2: All witness functions $x(t)$ for any $x \in X$ are
  continuous as they are differentiable in any interval.
\item Observation 3: Given the above two observations, the saturation
  value for any variable $x$ always exists in the time interval
  $[(k-1) \times \delta, k \times \delta]$ when the location switch
  happens at instant $k \times \delta$.
\end{itemize}
\end{proof}

\subsection{Composition of SHAs}
\label{sec:composition-sha}

\begin{figure*}[h!]
  \centering
  \input{./figures/parallelComposition}
  \caption{Composition of \ac{HA}s in our framework}
  \label{fig:parallelComp}
\end{figure*}

This section describes the modular compilation of SHAs using the
composition rules defined in Definition~\ref{sec:sha-semant-comp}.
Let us consider a part of the water tank and gas burner HAs as shown in
Figure~\ref{fig:parcomHA}. The equivalent partial SHAs for these two HAs
are shown in Figure~\ref{fig:parcomSHA}. As described previously, in
Section~\ref{sec:generation-sha}, the ODEs (flow predicates) in each
location have been replaced with their individual witness functions. We
can furthermore compile these SHAs into individual SWAs as shown in
Figure~\ref{fig:parcomDTTS}. The composition rules defined in
Definition~\ref{sec:sha-semant-comp} are applied on these resultant SWAs
to generate a single SWA shown in Figure~\ref{fig:parcomCode}.

\begin{algorithm}[t!]
  \begin{algorithmic}[1]
    \REQUIRE SWA $Q_1$, $Q_2$
    \STATE $Q_c \leftarrow Q_1 \times Q_2$ \label{alg:3:1} \\
    \COMMENT{Apply \SHArule{Intra-Intra}}
    \FORALL{$q \in Q_c$}  \label{alg:3:2}
    \STATE build\_self\_transition(q)
    \ENDFOR \\ \label{alg:3:3}
    \COMMENT{Apply \SHArule{Intra-Inter}}
    \FORALL{$q_1 \in Q_c $} \label{alg:3:4}
    \STATE $(l_1, l_2) \leftarrow q_1$ \label{alg:3:6}
    \FORALL{$q_2 \in Q_c \wedge q_1 \neq q_2$}
    \STATE $(l_1', l_2') \leftarrow q_2$ \label{alg:3:9}
    \IF{$l_1 = l_1' \wedge l_2 \neq l_2'$} \label{alg:3:7}
    \STATE build\_transition($q_1, q_2$) \label{alg:3:8}
    \ENDIF
    \ENDFOR
    \ENDFOR \label{alg:3:5}\\
    \COMMENT{\SHArule{Inter-Intra} is similar to the
      \SHArule{Intra-Inter}}
  \end{algorithmic}
  \caption{The pseudo-code used to compose two SWAs}
  \label{alg:3}
\end{algorithm}

The pseudo-code used to compose two SWAs is shown in
Algorithm~\ref{alg:3}. The algorithm takes as input two SWAs: $Q_1$ and
$Q_2$, respectively. In case of $N$ SWAs, the composition procedure is
applied recursively. As the very first step, the algorithm builds the
cross product of all states in the constituent SWAs $Q_1$ and $Q_2$
(line~\ref{alg:3:1}). Next, the intra-location and inter-location rules
from Definition~\ref{sec:sha-semant-comp} are applied sequentially.

The very first rule that we apply is the intra-location transition rule
(\SHArule{Intra-Intra}). The application of this rule simply requires one to
iterate through each state in the product set $Q_c$, building
self-transitions on states. The example of the application of
\SHArule{Intra-Intra} is shown in Figure~\ref{fig:parcomCode}. Given state
$(t_2, b_4)$ in the resultant SWA; a self-transition from $(t_2, b_4)$
to $(t_2, b_4)$ is introduced. This transition indicates the
simultaneous evolution of the ODEs from individual states $t_2$ and
$b_4$ from the constituent SWAs $Q_1$ and $Q_2$, respectively. As seen
from Figure~\ref{fig:parcomCode}, the continuous variables $x$ and $y$
evolve together according to their individual witness functions from the
self-transitions on states $t_2$ and $b_4$ in Figure~\ref{fig:parcomDTTS},
provided that the \textit{conjunction} of the individual guards holds.

Next, we apply the three inter-location rules
sequentially. Algorithm~\ref{alg:3} only shows the application of
\SHArule{Intra-Inter}, since other rules can be trivially derived from
this rule. The algorithm traverses through each state of the the product
set $Q_c$. Upon visiting a state $q_1$, the state label is first
decomposed into its constituent parts (line~\ref{alg:3:6}). For the
running example, given $q_1 = (t_2, b_4)$, line~\ref{alg:3:6}, gives
$l_1 = t_2$ and $l_2 = b_4$, respectively. Next, we iterate through all
the states in $Q_c$ other than state $q_1$, again decomposing the state
label into its constituent location names, $l_1'$ and $l_2'$,
respectively at line~\ref{alg:3:9}. \SHArule{Intra-Inter} states that
the second SWA $Q_2$ makes an inter-location transition. SWA $Q_1$ on
the other hand is forced to make a transition, such that the destination
state after the transition has the same location label as the source
transition state. Hence, the algorithm builds transitions from state
$q_1$ to any state $q_2$ such that the constituent label $l_1$ of state
$q_1$ and $l_1'$ of state $q_2$ are the same, but $l_2$ and $l_2'$ are
different. The result of application of such a rule is shown in
Figure~\ref{fig:parcomDTTS}, as transition
($t_2b_4, \frac{\overline{ON} \wedge \overline{OFF} \wedge (0 \leq x[k]
  \leq 100) \wedge (y[k]=\frac{1}{10}) } {y[k+1]=y[k], x[0] = x[k], OFF,
  k=0},t_2b_1$).
The second SWA $Q_2$ takes a discrete transition, forcing the first one
to also take a discrete transition. This transition is only to a state
where the location of the first SWA ($Q_1$) does not change. The guards
on this transition are a conjunction of the set of individual guards. We
add a special update $x[0] = x[k]$, which carries the value of the
continuous variable $x$. Note that the update $x[0] = x[k]$ is the
consequence of $i_1' = v_1$ in \SHArule{Intra-Inter}.


\section{Results}
\label{sec: results}

We compare the efficacy of the proposed approach 
(our tool \ourTool) 
with Simulink. In particular we compare the performance 
of these two tools relative to execution time and 
code size.
For the purposes of this comparison, we use the seven benchmarks presented in
Table~\ref{tab:benchmarks}. 
These benchmarks span across different application
 domains such as medical, physics, and 
 industrial automation, illustrating the diversity of the proposed 
  approach. In our setting we mean
  IOHA~\cite{lynch03} when we state HA. IOHA enable the modelling of the plant and
  the controller separately.

 As depicted in column one of Table~\ref{tab:benchmarks}, four out of the 
 seven benchmarks are described using a single \ac{HA} and 
 the remaining three examples are described using more than one \ac{HA}.
 The table also presents the 
 number of locations (\#L) in each hybrid automata.
 For example, (2,3) denotes that the Train Gate Control
 (TG) benchmark is described using a HA with two locations 
 and a second HA with three locations. More details
 about the benchmarks and their 
 implementation in \ourTool and Simulink
 are available online~\cite{githubBenchmarks}.
 

\subsection{Experimental set-up} 
The following steps are considered in order to
 achieve a fair comparison between \ourTool and Simulink.
 
 \begin{description}
 	\item[\textbf{Solver}] To reflect the synchronous execution model,
	 	we used a discrete solver with a fixed step in Simulink.
	 	The Discrete-Time Integrator block is configured to
        use the Forward Euler method. Other methods such as 
        Backward Euler and Trapezoidal resulted in an ``algebraic loop''
        error and we did not pursue a solution to this.
	 	
	\item[\textbf{Step size}] 	For all benchmarks the step size in Simulink is fixed to $0.01$ seconds.
	Also the same step size is used in \ourTool, $\delta = 0.01$ seconds.
	
	\item[\textbf{Time}] All benchmarks were simulated in Simulink for 
	$100,000$ seconds of simulation time. Based on a step size of $0.01$ seconds,
	in \ourTool this translates to $10$ million ticks.
	 	
 \end{description}

The experiments are executed on an Intel~i7 processor
with 16~GB RAM running the Windows~7 operating system. 

\subsection{Evaluation}

For all the benchmarks, the executable
for the Simulink models  are generated using the 
in-built C~code generator. It automatically
generates equivalent C~code and compiles it to produce
an executable.
Similarly, \ourTool generates equivalent C~code
and generates an executable using GCC.
 The execution time and the code size of the
 generated executables are reported below.
 
\textbf{Execution time:} 
Figure~\ref{fig:compareET} 
shows that for all benchmarks, 
the execution times of \ourTool are significantly 
shorter than Simulink.
On average, \ourTool is $3.9$ times faster than Simulink. 
The most significant difference between the tools
is observed for the most complex example 
(most locations), the
 water tank heating system~(WH). 
 For this example, the execution
time of \ourTool is $7.3$ times faster 
than Simulink. 


\textbf{Code size:} Figure~\ref{fig:compareET} 
shows that for all benchmarks, 
the code size of \ourTool is significantly 
smaller than Simulink.
On average, the generated code is $40\%$ 
smaller than the Simulink code.
The most significant difference is $47\%$ which is observed
for the Thermostat~(TS) example.
In general, Simulink is a more feature rich tool
and there may be some overheads during code generation
where numerical solvers are linked during compilation.


In summary, on average Piha is faster 
(in execution time) than Simulink by a factor of $3.9$ times 
and the generated code is $40\%$ smaller than the Simulink code.
In the future, we will extend our benchmark suite with more complex 
examples and quantitatively compare with Ptolemy and Z\'elus.

Finally, in Table~\ref{tab:vsAlur} we qualitatively compare the 
acceptability of the benchmarks between our tool and the tool based on~\cite{alur2003generating}. 
As discussed earlier in the introduction and Section~\ref{sec:saturation-function},
the tool in~\cite{alur2003generating} requires that ``a given mode and the corresponding guard of any switch must
overlap for a duration that is greater than the sampling
period''~\cite{alur2003generating}. Due to this restriction,
it is not possible to accept any HA that has a guard condition that
checks for equality. For the running example (Figure~\ref{fig:waterTankHAtank}), 
the guard condition $x=100$ checks if the water has reached the boiling 
point. Similarly, in the train gate control benchmark,
there is a guard condition that checks if the gate height is exactly 
equal to $10$~meters. In summary, due to our saturation function,
we can accept a larger set of benchmarks for code generation than the
tool in~\cite{alur2003generating}.

\begin{table*}
	\centering
	\caption{Benchmark descriptions
	\label{tab:benchmarks}}
\begin{tabular}{| c | c | c| c| l  |} \hline
\multicolumn{2}{|c|}{\textbf{Benchmarks}} 
	& \textbf{Domain} 
	& \textbf{\# L } 
	& \textbf{Description} \\ \hline
\multirow{4}{*}{\centering \rotatebox{90}{Single} \rotatebox{90}{HA}}
	& Thermostat (TS) 
		& Physics~\cite{Pedro2005}
		& 2
		& Heats a room to keep it warm.\\ \cline{2-5}
	& Switch Tank (ST)  
		& Physics~\cite{lygeros2008hybrid}
		& 2
		& A hose adding water by switching between two leaky tanks. \\ \cline{2-5}
	& Heart Cell (HC)  
		& Biology~\cite{chen201487}
		& 4
		& Captures the electrical behaviour of a cardiac cell.\\ \cline{2-5}
	& Train Brake control (TB) 
		& Industrial automation~\cite{platzer2012logical} 
		& 2
		& Maintains a train's speed between the upper and lower limit.\\ \hline
\multirow{4}{*}{\centering \rotatebox{90}{Two} \rotatebox{90}{HAs}}
	& Water Heating system (WH)  
		& Physics~\cite{raskin05}
		& 4 , 4
		& Models the heating of water (see Figure~\ref{fig:waterTank}) \\ \cline{2-5}
	& Train Gate control(TG)  
		& Industrial automation~\cite{Costello2013}
		& 2, 3
		& Models the behaviour of a gate at a rail road crossing.\\ \cline{2-5}
	& Nuclear Plant control (NP)  
		& Industrial automation~\cite{alur2015principles}
		& 3, 3
		& Switches between two fuel rods to avoid a meltdown.\\ \hline

 \end{tabular}
 \end{table*}
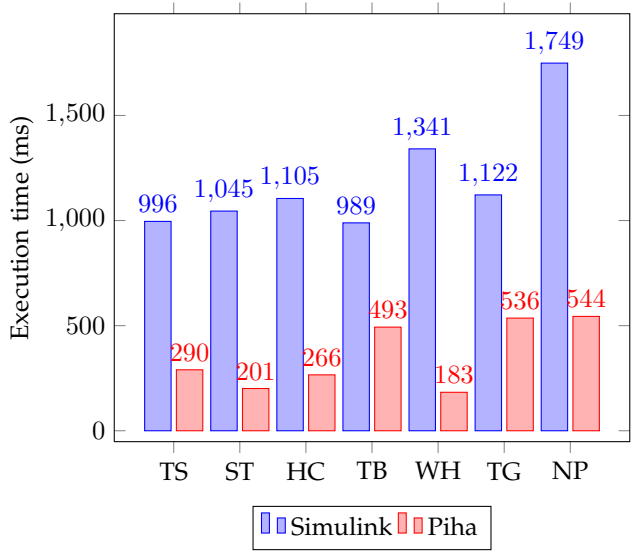
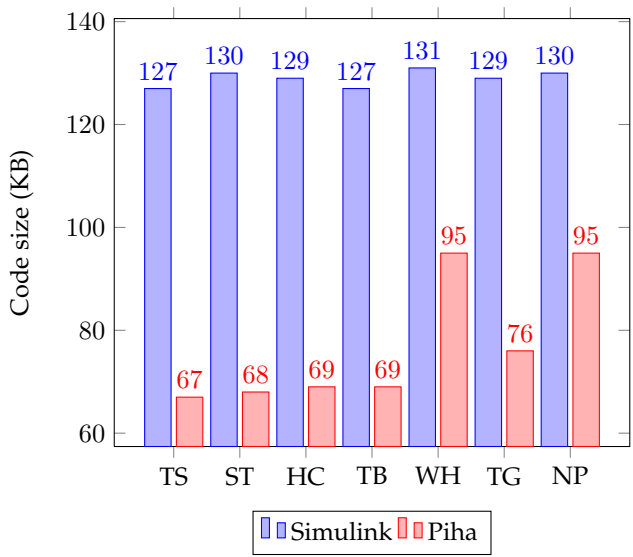
\begin{figure}[htbp]
	\centering
	\begin{minipage}{1.95\linewidth}
		\subfigure[Execution time (ms) \label{fig:compareET}]{
		
		\begin{tikzpicture}
		\begin{axis}[
		ybar,
		enlargelimits=0.15,
		legend style={at={(0.5,-0.15)},
			anchor=north,legend columns=-1},
		ylabel={Execution time (ms)},
		symbolic x coords={TS, ST, HC, TB, WH, TG, NP},
		xtick=data,
		nodes near coords,
		nodes near coords align={vertical},
		]
		\addplot coordinates {
			(TS,996)
			(ST,1045)
			(HC,1105)
			(TB,989)
			(WH,1341)
			(TG,1122)
			(NP,1749) };
		\addplot coordinates {
			(TS,290)
			(ST,201)
			(HC,266)
			(TB,493)
			(WH,183)
			(TG,536)
			(NP,544) };
		\legend{Simulink, Piha}
		
		\end{axis}
		\end{tikzpicture}

		}
		
		\qquad
		
		\subfigure[Code size (KB) \label{fig:compareCS}]{
			
			\begin{tikzpicture}
			\begin{axis}[
			ybar,
			enlargelimits=0.15,
			legend style={at={(0.5,-0.15)},
				anchor=north,legend columns=-1},
			ylabel={Code size (KB)},
			symbolic x coords={TS, ST, HC, TB, WH, TG, NP},
			xtick=data,
			nodes near coords,
			nodes near coords align={vertical},
			]
			\addplot coordinates {
				(TS,127)
				(ST,130)
				(HC,129)
				(TB,127)
				(WH,131)
				(TG,129)
				(NP,130) };
			\addplot coordinates {
				(TS,67)
				(ST,68)
				(HC,69)
				(TB,69)
				(WH,95)
				(TG,76)
				(NP,95) };
			\legend{Simulink,Piha}
			
			\end{axis}
			\end{tikzpicture}

		}
	\end{minipage}

	\caption{
		Comparing the execution time (in ms) and code size (in KB) of Simulnk 
		and \ourTool~for the benchmarks in Table~\ref{tab:benchmarks}.
	}
	\label{fig:results}
\end{figure}
\begin{table}
  \centering
  \caption{What benchmarks can be accepted based on the  restrictions of the tool in~\cite{alur2003generating} and  our tool \ourTool. \label{tab:vsAlur}}
  \begin{tabular}{| l | c | c| } \hline
    \textbf{Benchmarks} & \textbf{Tool in~\cite{alur2003generating}} & \textbf{\ourTool} \\ \hline

    Thermostat (TS) 			& Yes	& Yes \\ \hline
    Switch Tank (ST) 			& Yes	& Yes \\ \hline
    Heart Cell (HC) 			& Yes	& Yes \\ \hline
    Train Brake control (TB) 	& Yes	& Yes \\ \hline
    Water Heating system (WH)   & No	& Yes \\ \hline
    Train Gate control(TG)  	& No	& Yes \\ \hline
    Nuclear Plant control (NP)  & Yes	& Yes \\ \hline

  \end{tabular}
\end{table}


\section{Conclusions}
\label{sec:conclusions}

Hybrid automata (HA)~\cite{alur93} is a very well known framework for the modelling
and verification of \acf{CPS}~\cite{alur2015principles}. The primary focus of the current work is 
the emulation of controllers using plant models that provide real-time
 closed-loop response, while validating the controllers in a CPS. 
We term such plant models as plant-on-a-chip
 (PoC).

 HA was initially developed for the
simulation / verification of CPS where both the plant and the controller
are considered as a single HA. We seek to validate controllers 
 using emulation of the plant. Variants such as input / output
 HA~\cite{lynch03} (IOHA) are
proposed to specify both the plant and the adjoining controller and
this work has recently been used for the validation of controllers
such as pacemakers~\cite{chen14}. The majority of controller validation 
 approaches use tools such as Simulink for the modelling and code
 generation of the plant to validate the controller. There are well
 known semantic limitations of using such tools for
validation. Tools such as Ptolemy~\cite{ptolemaeus2014system} and
Z\'elus~\cite{bourke13zelus}, on the other hand, are founded on formal semantics. However,
they have limitations for emulating plants due to the dynamic
interaction with numerical solvers. 
There have also been some prior work on automatic code generation from
HA models~\cite{alur2003generating, kim2003modular}.
However, these approaches handle very restrictive set of examples.

This paper formulates the problem of emulation of 
\ac{CPS} using automated algorithmic techniques
for code generation from IOHA models. We firstly defined a set of
well-formedness criteria that are specifically developed for
facilitating code generation. We also propose a discrete time
semantics of well-formed HA (WHA) based on the synchronous
approach~\cite{benveniste03}. Based on this semantics and an approach 
 for synchronous composition of HA models, also proposed here, we are
 able to perform modular compilation from a network of HA models to C
 code for PoC design. 

Experimental validation of the proposed approach compared with
Simulink reveals that the generated code is efficient both in terms of
execution time and code size while avoiding the semantic ambiguities
associated with Simulink. The proposed approach, thus, paves the way
for the emulation of a wide range of CPS applications in automotive,
medical devices, and robotics. 

In the near future, we will compare Piha with Z\'elus and Ptolemy quantitatively. 
We will also pursue the decidability question of WHA.

\ignore{While this paper is a starting point for PoC design, several open
questions need to be addressed in the near future. Firstly, we will
examine automatic techniques to transform HA descriptions to
WHA models, when feasible. We will also explore automated techniques
for reachability analysis of WHA models and also consider the issue of
decidability. Finally, we will explore code generation from a broader
class of HA models which may not satisfy the monotonicity requirements
and may involve dependent ODEs. We also plan to explore \ac{HA} to 
hardware synthesis to exploit the use of true parallelism for emulating
plant.}

\section*{Acknowledgements}

This research was funded by the University of Auckland FRDF
Postdoctoral grant No. 3707500. The authors would like to thank Th\'eo
Steger, an intern from the ENSTA Paris Tech University, for
his help in developing benchmarks and testing our tool.
The authors also acknowledge discussions with colleagues in
Control Engineering, in particular Nitish Patel and Akshya
Swain for discussions on sampled systems and synchronous
compositions between plant and controllers.

\ifCLASSOPTIONcaptionsoff
  \newpage
\fi



%


\begin{thebibliography}{10}

\bibitem{lee08cps}
E.~A. Lee, ``{Cyber Physical Systems: Design Challenges},'' in {\em
  {Proceedings of the 2008 11th IEEE Symposium on Object Oriented Real-Time
  Distributed Computing}}, ISORC '08, (Washington, DC, USA), pp.~363--369, IEEE
  Computer Society, 2008.

\bibitem{alur2015principles}
R.~Alur, {\em {Principles of Cyber-Physical Systems}}.
\newblock MIT Press, 2015.

\bibitem{raskin05}
J.-F. Raskin, {\em Handbook of Networked and Embedded Control Systems}, ch.~An
  introduction to hybrid automata, pp.~491--517.
\newblock Springer, 2005.

\bibitem{alur93}
R.~Alur, C.~Courcoubetis, T.~A. Henzinger, and P.-H. Ho, ``{Hybrid Automata: An
  Algorithmic Approach to the Specification and Verification of Hybrid
  Systems},'' in {\em Hybrid Systems}, (London, UK), pp.~209--229,
  Springer-Verlag, 1993.

\bibitem{wilhelm08wcrt}
R.~Wilhelm, J.~Engblom, A.~Ermedahl, N.~Holsti, S.~Thesing, D.~Whalley,
  G.~Bernat, C.~Ferdinand, R.~Heckmann, T.~Mitra, F.~Mueller, I.~Puaut,
  P.~Puschner, J.~Staschulat, and P.~Stenstr\"{o}m, ``The worst-case
  execution-time problem---overview of methods and survey of tools,'' {\em
  Trans. on Embedded Computing Systems, ACM}, vol.~7, no.~3, pp.~1--53, 2008.

\bibitem{iec61508}
{International Electrotechnical Commission}, ``{IEC 61508 Functional safety of
  electrical / electronic / programmable electronic safety-related systems}.''
  http://www.iec.ch/functionalsafety/.
\newblock last accessed 01/09/2015.

\bibitem{iso26262}
{International Organization for Standardization}, ``{ISO 26262-1: Road vehicles
  -- Functional safety}.''
  http://www.iso.org/iso/catalogue\char`_detail?csnumber=43464.
\newblock last accessed 01/09/2015.

\bibitem{youn2015software}
W.~K. Youn, S.~B. Hong, K.~R. Oh, and O.~S. Ahn, ``{Software certification of
  safety-critical avionic systems: DO-178C and its impacts},'' {\em Aerospace
  and Electronic Systems Magazine, IEEE}, vol.~30, no.~4, pp.~4--13, 2015.

\bibitem{fda}
``{US Food and Drug Administration (FDA)}.'' http://www.fda.gov.
\newblock last accessed 29/05/2015.

\bibitem{alemzadeh13}
H.~Alemzadeh, R.~K. Iyer, Z.~Kalbarczyk, and J.~Raman, ``Analysis of
  safety-critical computer failures in medical devices,'' {\em Security \&
  Privacy, IEEE}, vol.~11, no.~4, pp.~14--26, 2013.

\bibitem{alur08symbolic}
R.~Alur, A.~Kanade, S.~Ramesh, and K.~Shashidhar, ``{Symbolic analysis for
  improving simulation coverage of Simulink/Stateflow models},'' in {\em
  Proceedings of the 8th ACM international conference on Embedded software},
  pp.~89--98, ACM, 2008.

\bibitem{bourke13zelus}
T.~Bourke and M.~Pouzet, ``{Z{\'e}lus: a synchronous language with ODEs},'' in
  {\em Proceedings of the 16th international conference on Hybrid systems:
  computation and control}, pp.~113--118, ACM, 2013.

\bibitem{ptolemaeus2014system}
C.~Ptolemaeus, {\em System Design, Modeling, and Simulation: Using Ptolemy II}.
\newblock Ptolemy. org, 2014.

\bibitem{bourke15code}
T.~Bourke, J.-L. Colaco, B.~Pagano, C.~Pasteur, and M.~Pouzet, ``A
  synchronous-based code generator for explicit hybrid systems languages,'' in
  {\em Compiler Construction} (B.~Franke, ed.), vol.~9031 of {\em Lecture Notes
  in Computer Science}, pp.~69--88, Springer Berlin Heidelberg, 2015.

\bibitem{patel2015survey}
K.~N. Patel and R.~H. Javeri, ``A survey on emulation testbeds for mobile
  ad-hoc networks,'' {\em Procedia Computer Science}, vol.~45, pp.~581--591,
  2015.

\bibitem{benveniste03}
A.~Benveniste, P.~Caspi, S.~Edwards, N.~Halbwachs, P.~Le~Guernic, and
  R.~de~Simone, ``The synchronous languages 12 years later,'' {\em Proceedings
  of the IEEE}, vol.~91, pp.~64--83, Jan. 2003.

\bibitem{baier08}
C.~Baier and J.-P. Katoen, {\em {Principles of Model Checking}}.
\newblock The MIT Press, 2008.

\bibitem{SpaceEx}
G.~Frehse, C.~Le~Guernic, A.~Donz{\'e}, S.~Cotton, R.~Ray, O.~Lebeltel,
  R.~Ripado, A.~Girard, T.~Dang, and O.~Maler, ``{SpaceEx: Scalable
  Verification of Hybrid Systems},'' in {\em Proceedings of the 23rd
  International Conference on Computer Aided Verification}, CAV'11, (Berlin,
  Heidelberg), pp.~379--395, Springer-Verlag, 2011.

\bibitem{tripakis05translating}
S.~Tripakis, C.~Sofronis, P.~Caspi, and A.~Curic, ``{Translating discrete-time
  Simulink to Lustre},'' {\em ACM Transactions on Embedded Computing Systems
  (TECS)}, vol.~4, no.~4, pp.~779--818, 2005.

\bibitem{andalam14}
S.~Andalam, P.~S. Roop, A.~Girault, and C.~Traulsen, ``A predictable framework
  for safety-critical embedded systems,'' {\em IEEE Transactions on Computers},
  vol.~63, no.~7, pp.~1600--1612, 2014.

\bibitem{scade}
``{SCADE Tools}.'' http://www.esterel-technologies.com/.
\newblock last accessed - 19.6.15.

\bibitem{fornari2010understanding}
X.~Fornari, ``{Understanding how SCADE suite KCG generates safe C code},'' {\em
  White paper, Esterel Technologies}, 2010.

\bibitem{alur2003generating}
R.~Alur, F.~Ivancic, J.~Kim, I.~Lee, and O.~Sokolsky, ``Generating embedded
  software from hierarchical hybrid models,'' {\em ACM SIGPLAN Notices},
  vol.~38, no.~7, pp.~171--182, 2003.

\bibitem{kim2003modular}
J.~Kim and I.~Lee, ``Modular code generation from hybrid automata based on data
  dependency,'' in {\em Real-Time and Embedded Technology and Applications
  Symposium, 2003. Proceedings. The 9th IEEE}, pp.~160--168, IEEE, 2003.

\bibitem{Bresolin2011}
D.~Bresolin, L.~Di~Guglielmo, L.~Geretti, and T.~Villa,
  ``{Correct-by-construction code generation from hybrid automata
  specification},'' in {\em International Wireless Communications and Mobile
  Computing Conference (IWCMC)}, pp.~1660--1665, IEEE, July 2011.

\bibitem{alur94}
R.~Alur and D.~L. Dill, ``A theory of timed automata,'' {\em Theoretical
  computer science}, vol.~126, no.~2, pp.~183--235, 1994.

\bibitem{yoong2015model}
L.~H. Yoong, P.~S. Roop, Z.~E. Bhatti, and M.~M. Kuo, {\em {Model-Driven Design
  Using IEC 61499}}.
\newblock Springer, 2015.

\bibitem{chen14}
T.~Chen, M.~Diciolla, M.~Kwiatkowska, and A.~Mereacre, ``Quantitative
  verification of implantable cardiac pacemakers over hybrid heart models,''
  {\em Information and Computation}, vol.~236, pp.~87--101, 2014.

\bibitem{lynch03}
N.~Lynch, R.~Segala, and F.~Vaandrager, ``{Hybrid I/O automata},'' {\em
  Information and computation}, vol.~185, no.~1, pp.~105--157, 2003.

\bibitem{brooks15}
C.~Brooks, E.~A. Lee, D.~Lorenzetti, T.~S. Nouidui, and M.~Wetter, ``{CyPhySim:
  A Cyber-physical Systems Simulator},'' in {\em Proceedings of the 18th
  International Conference on Hybrid Systems: Computation and Control}, HSCC
  '15, (New York, NY, USA), pp.~301--302, ACM, 2015.

\bibitem{harel:ReactiveSystems}
D.~Harel and A.~Pnueli, ``{On the Development of Reactive Systems},'' in {\em
  {Logics and Models of Concurrent Systems}} (K.~Apt, ed.), NATO ASI Series,
  Vol. F-13, (La Colle-sur-Loup, France), pp.~477--498, Springer-Verlag, 1985.

\bibitem{ogata10}
K.~Ogata, {\em Modern control engineering}.
\newblock Boston : Prentice-Hall, 2010.

\bibitem{carlsson2012methods}
H.~Carlsson, B.~Svensson, F.~Danielsson, and B.~Lennartson, ``{Methods for
  reliable simulation-based PLC code verification},'' {\em Industrial
  Informatics, IEEE Transactions on}, vol.~8, no.~2, pp.~267--278, 2012.

\bibitem{ju:PerformanceEsterel}
L.~Ju, B.~K. Huynh, A.~Roychoudhury, and S.~Chakraborty, ``{Performance
  Debugging of Esterel Specifications},'' in {\em {IEEE/ACM/IFIP International
  Conference on Hardware/Software Codesign and System Synthesis (CODES+ISSS)}},
  (Atlanta), pp.~173--178, ACM, October 2008.

\bibitem{roop:TightWCRT}
P.~S. Roop, S.~Andalam, R.~von Hanxleden, S.~Yuan, and C.~Traulsen, ``{Tight
  {WCRT} Analysis of Synchronous {C} Programs},'' in {\em {Proceedings of the
  international conference on Compilers, architecture, and synthesis for
  embedded systems (CASES)}}, (Grenoble), pp.~205--214, ACM, October 2009.

\bibitem{Wang13ILP}
J.~J. Wang, P.~S. Roop, and S.~Andalam, ``{ILPc: A Novel Approach for Scalable
  Timing Analysis of Synchronous Programs},'' in {\em {International Conference
  on Compilers, Architecture, and Synthesis for Embedded Systems (CASES)}},
  Oct. 2013.

\bibitem{berry2000foundations}
G.~Berry, ``{The foundations of Esterel.},'' in {\em Proof, language, and
  interaction}, pp.~425--454, 2000.

\bibitem{halbwachs1991synchronous}
N.~Halbwachs, P.~Caspi, P.~Raymond, and D.~Pilaud, ``{The synchronous data flow
  programming language LUSTRE},'' {\em Proceedings of the IEEE}, vol.~79,
  no.~9, pp.~1305--1320, 1991.

\bibitem{leguernic1991programming}
P.~LeGuernic, T.~Gautier, M.~Le~Borgne, and C.~Le~Maire, ``{Programming
  real-time applications with SIGNAL},'' {\em Proceedings of the IEEE},
  vol.~79, no.~9, pp.~1321--1336, 1991.

\bibitem{rudin1987real}
W.~Rudin, {\em Real and complex analysis}.
\newblock Tata McGraw-Hill Education, 1987.

\bibitem{githubBenchmarks}
``{Piha Benchmarks}.'' https://github.com/PRETgroup/ PihaBenchmarks.
\newblock last accessed - 15.08.2015.

\bibitem{Pedro2005}
H.~Joao~Pedro, ``{How to describe a hybrid system? Formal models for hybrid
  system}.'' University of California at Santa Barbara, Course ECE229, Lecture
  2. http://www.ece.ucsb.edu/\symbol{126}hespanha/ece229/Lectures/
  Lecture2.pdf, 2005.

\bibitem{lygeros2008hybrid}
J.~Lygeros, C.~Tomlin, and S.~Sastry, ``Hybrid systems: Modeling, analysis and
  control.'' University of California, Berkeley, Course EE291e.
  http://inst.cs.berkeley.edu/\symbol{126}ee291e/sp09/handouts/book.pdf, 2009.

\bibitem{chen201487}
T.~Chen, M.~Diciolla, M.~Kwiatkowska, and A.~Mereacre", ``{Quantitative
  verification of implantable cardiac pacemakers over hybrid heart models },''
  {\em Information and Computation}, vol.~236, pp.~87 -- 101, 2014.

\bibitem{platzer2012logical}
A.~Platzer, ``Logical analysis of hybrid systems,'' in {\em Descriptional
  Complexity of Formal Systems}, pp.~43--49, Springer, 2012.

\bibitem{Costello2013}
C.~Brennon and E.~Joshua, ``{Hybrid Systems}.'' Tufts University, Course EE194,
  Lecture. http://www.eecs.tufts.edu/\symbol{126}khan/Courses/Spring2013/EE194/
  Lecs/ Hybrid{\_} Systems\_Presentation\_Elliott\_Costello.pdf, 2013.

\end{thebibliography}

%

\begin{IEEEbiography}
	[{\includegraphics[width=1in,height=1.25in,clip,keepaspectratio]
		{./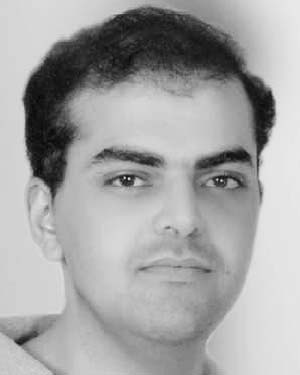}}]
	{Avinash Malik}
Avinash Malik is a Lecturer at the University
of Auckland, New Zealand. His main
research interest lies in programming languages
for multicore and distributed systems
and their formal semantics/ compilation. He
has worked at organizations such as INRIA in
France, Trinity College Dublin, IBM research
Ireland, and IBM Watson on design and compilation
of programming languages. He holds
B.E. and PhD degrees from the University of
Auckland.
\end{IEEEbiography}

\begin{IEEEbiography}
[{\includegraphics[width=1in,height=1.25in,clip,keepaspectratio]
	{./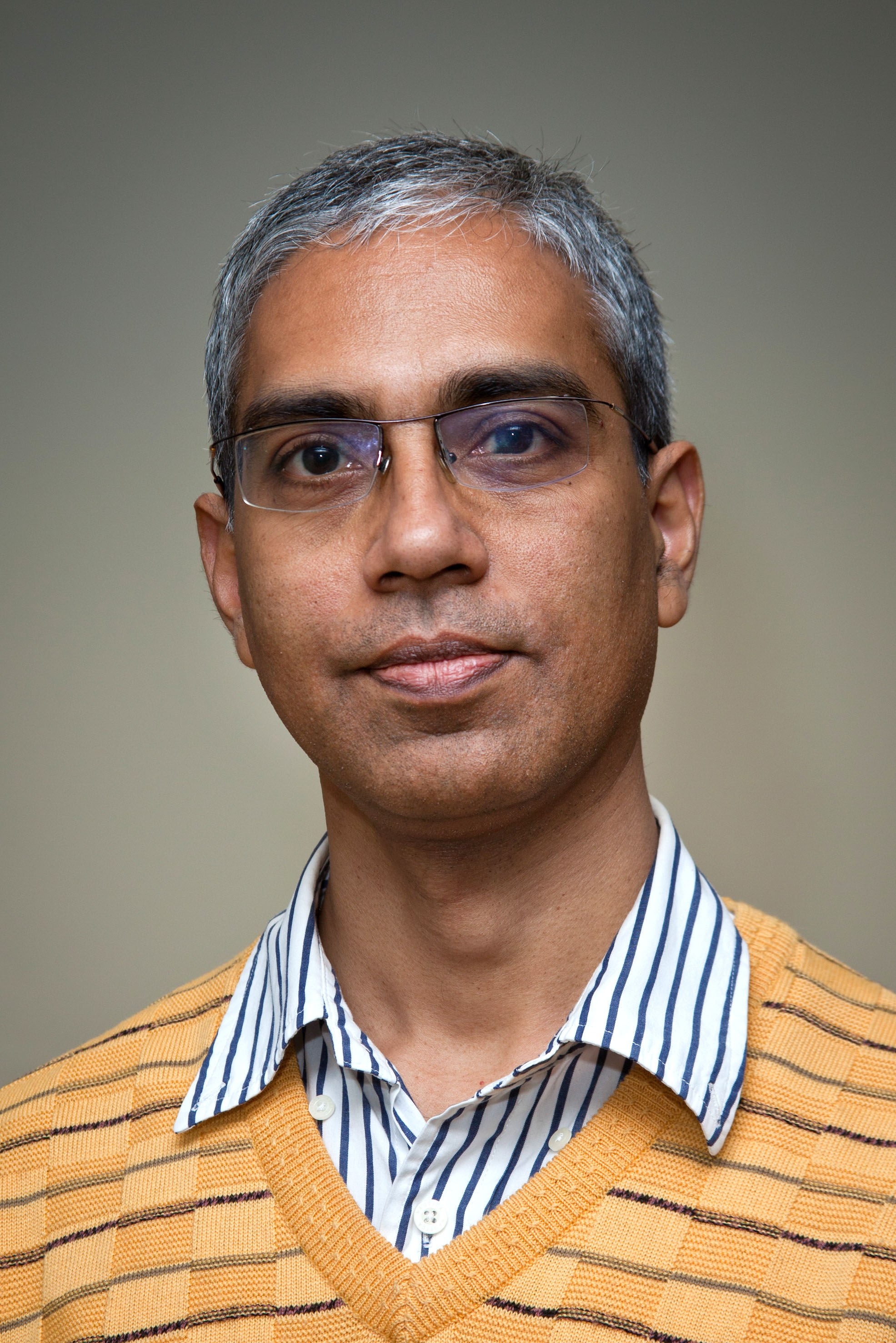}}]
	{Partha S Roop}
	Partha received PhD degree in computer science (software
	engineering) from the University of New south Wales,
	Sydney, Australia, in 2001. He is currently an Associate
	Professor and is the Director of the Computer Systems
	Engineering Program with the Department of Electrical
	and Computer Engineering, University of Auckland, New
	Zealand. 
	Partha is an associated team member of the SPADES team
	INRIA, Rhone-Alpes and held a visiting positions in
	CAU Kiel and  Iowa state University.
	His research interests include the design and
	verification of embedded systems. In particular, he is
	developing techniques for the design of embedded
	applications in automotive, robotics, and intelligent
	transportation systems that meet functional safety
	standards.
\end{IEEEbiography}


\begin{IEEEbiography}
	[{\includegraphics[width=1in,height=1.25in,clip,keepaspectratio]
		{./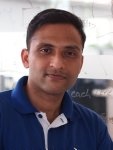}}]
	{Sidharta Andalam}

	Sidharta Andalam received his Ph.D from University of
	Auckland in 2013, where his thesis focused on developing a
	predictable platform for safety-critical systems. He is
	currently a research fellow in embedded systems at the
	University of Auckland, New Zealand. His principle research
	interests are on design, implementation and analysis of
	safety-critical applications. He has worked at TUM CREATE,
	Singapore exploring safety-critical applications in
	automotive domain.

\end{IEEEbiography}

\begin{IEEEbiography}
	[{\includegraphics[width=1in,height=1.25in,clip,keepaspectratio]
		{./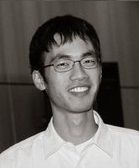}}]
	{Eugene Yip}
	
	Eugene Yip received the B.E. (Hons) and Ph.D degrees in
	electrical and computer systems engineering from the
	University of Auckland, New Zealand. At the start of
	2015, he joined the Heart-on-FPGA research group in the
	Department of Electrical and Computer Engineering,
	University of Auckland, as a research assistant working
	on the real-time modeling of cardiac electrical
	activity. In the middle of 2015, he joined the SWT
	research group at the University of Bamberg as a
	research assistant working on synchronous languages. His
	current research interests include synchronous
	mixed-criticality systems, parallel programming, static
	timing analysis, formal methods for organ modeling, and
	biomedical devices.

\end{IEEEbiography}

\begin{IEEEbiography}
	[{\includegraphics[width=1in,height=1.25in,clip,keepaspectratio]
		{./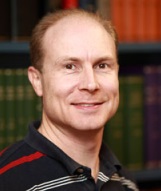}}]
	{Mark Trew}
	
	Mark Trew attained a Bachelor of Engineering in
	Engineering Science in 1992 and a PhD Engineering
	Science in 1999, both from the University of Auckland.
	He is currently a Senior Research Fellow at the Auckland
	Bioengineering Institute. Mark constructs computer
	models and analysis tools for interpreting and
	understanding detailed images of cardiac tissue and
	cardiac electrical activity.

\end{IEEEbiography}

\end{document}